\def\gO{{\mathcal{O}}}
\def\gN{{\mathcal{N}}}
\def\gA{{\mathcal{A}}}
\def\mU{{\bm{U}}}
\def\gS{{\mathcal{S}}}
\def\gF{{\mathcal{F}}}
\def\gH{{\mathcal{H}}}
\def\gP{{\mathcal{P}}}
\def\gB{{\mathcal{B}}}
\def\sP{{\mathbb{P}}}
\def\sR{{\mathbb{R}}}
\DeclarePairedDelimiter\brc{\{}{\}}
\DeclarePairedDelimiter\abs{\lvert}{\rvert}
\newcommand{\ind}[1]{\mathds{1}\brc*{#1}}
\DeclareMathOperator*{\argmax}{arg\,max}
\newtheorem{thm}{Theorem}
\newtheorem{defn}{Definition}
\newtheorem{lem}{Lemma}
\newtheorem{rema}{Remark}
\newtheorem{coro}{Corollary}
\title{Bandit Learning in Housing Markets}
\author{
     Shiyun Lin
}
\begin{document}

\maketitle

\begin{abstract}
The housing market, also known as one-sided matching market, is a classic exchange economy model where each agent on the demand side initially owns an indivisible good (a house) and has a personal preference over all goods. The goal is to find a core-stable allocation that exhausts all mutually beneficial exchanges among subgroups of agents. While this model has been extensively studied in economics and computer science due to its broad applications, little attention has been paid to settings where preferences are unknown and must be learned through repeated interactions.
In this paper, we propose a statistical learning model within the multi-player multi-armed bandit framework, where players (agents) learn their preferences over arms (goods) from stochastic rewards. We introduce the notion of \emph{core regret} for each player as the market objective. We study both centralized and decentralized approaches, proving $\mathcal{O}(\log T / \Delta^2)$ upper bounds on regret, where $T$ is the time horizon and $\Delta$ is the minimum preference gap among players. For the decentralized setting, we also establish a matching lower bound, demonstrating that our algorithm is order-optimal. \looseness=-1
\end{abstract}


\section{Introduction}\label{sec:intro}
The housing market, also known as one-sided matching market, represents a foundational economic institution where agents exchange property rights to achieve more desirable arrangements. Classical research in this domain traces back to the seminal work of~\citet{shapley1974cores}, and the model has been extensively studied in the literature~\citep{roth2023online} due to its wide range of applications such as student housing assignments~\citep{sonmez2010house}, school choice~\citep{abdulkadirouglu2003school} and kidney exchange~\citep{roth2004kidney}. In a housing market, each agent starts with an indivisible good -- typically representing a house -- and has their own personal preferences over all available goods in the market. Unlike traditional markets with buyers and sellers, here, agents trade their initial endowments based on their preferences, seeking to obtain a house they value more, and the houses must be traded without monetary transactions. The key challenge is to find a stable and fair allocation where no agent can improve their outcome by trading with someone else. 
The Top Trading Cycle (TTC) algorithm introduced by \citet{shapley1974cores} is a celebrated mechanism that ensures efficient, strategy-proof and Pareto optimal outcomes~\citep{roth1977weak}. Subsequent research has expanded this framework to accommodate different settings and provide various solutions to the problem~\citep{hylland1979efficient,echenique2021constrained,garg2024one}. 
Despite these advancements, the integration of adaptive learning mechanisms into  housing market models remains an underexplored area, particularly in online environments where participants must navigate uncertainty in preferences. 

The assumptions of a known, full preference profile in housing markets is often unrealistic. In practice, participants -- such as tenants on LeaseSwap NYC, patients in kidney exchange, or traders in NFT markets -- frequently lack well-defined preferences over items they have not experienced. Fortunately, these settings typically allow for repeated short-term interactions that yield immediate feedback, for example, through home visits, medical compatibility tests, or temporary NFT licensing. This enables agents to learn their uncertain preferences through iterative matchings, circumventing the limitations of classical models.

The multi-armed bandit (MAB) framework models how a player learns in an unknown environment with limited feedback~\citep{auer2002finite, lattimore2020bandit}. In the basic setup, a player faces $K$ arms, each with an initially unknown reward distribution. Upon selecting an arm, the player observes a stochastic reward and updates their belief about the arm's preference. The goal is to maximize cumulative expected rewards, or equivalently, minimize cumulative expected regret -- the difference between rewards from the optimal arm and the player's chosen arms over time. Classical strategies for balancing exploration (learning arm preferences) and exploitation (leveraging known rewards), such as explore-then-commit (ETC)~\citep{garivier2016explore}, upper confidence bound (UCB)~\citep{auer2002finite} and Thompson sampling (TS)~\citep{thompson1933likelihood}, achieve sublinear regret, ensuring asymptotic optimality.\looseness=-1

The online learning setting in housing market mirrors the exploration-exploitation trade-off central to MAB problems, where agents sequentially select actions to maximize cumulative rewards amid uncertainty. The dynamic and competitive nature of housing markets further complicates this learning process, as agents' decisions influence not only their own outcomes but also the opportunities available to others.

To formalize these challenges, we initiate \emph{bandit learning in housing markets}, abstract the market as a multiplayer bandit problem. Here, players and arms correspond to agents and houses, each player has heterogeneous and unknown preferences over the arms, and each player is associated with an arm, which denotes their initially endowed house. When being matched to an arm, the player could learn the corresponding preference through the stochastic reward, but every arm could be matched to at most one player every time, when more than one player try to pull the same arm, all of them would get collision and receive no rewards. Taking the outcome from the TTC algorithm as a natural and desirable solution for the market, denoted as the core matching, our objective is to minimize the regret defined as the cumulative reward difference between the arm from the core matching and the player's selected arm. We develop matching and learning algorithms that can provably attain the core of the market in this setting. Our contributions are as follows:

\begin{itemize}
	\item We introduce a novel model for housing markets in which agents initially lack knowledge of their preferences over houses but can repeatedly interact with the market to gradually learn these preferences. 
	A key contribution is the definition of a natural notion of regret, grounded in the cooperative game-theoretic concept of the \emph{core}, which quantifies the exploration-exploitation trade-off faced by individual players. 
	
	\item When the horizon $T$ of the bandit problem is known, we propose an ETC-type algorithm in the decentralized market setting, and prove an $\gO(N \log T / \Delta_{\min}^2)$ problem-dependent upper bounds on the regret for every player, where $N$ is the number of players, and $\Delta_{\min}$ is the players' minimum preference gap. 
	
	\item When the horizon $T$ is unknown, we provide a UCB-type algorithm in the centralized market setting, which is adaptive and anytime. We prove that centralized UCB achieves $\gO(N^2 \log T / \Delta_{\min}^2)$ problem-dependent upper bounds on the regret for every player.
	
	\item For the decentralized setting, we prove a matching lower bound of $\Omega (N \log T / \Delta_{\min}^2)$. To establish this, we construct an instance where a single player's exploration inevitably causes collisions for others. Specifically, any algorithm must spend $\Omega(\log T / \Delta_{\min}^2)$ rounds for a player to identify its optimal arm when the preference gap is $\Delta_{\min}$. In our construction, each such exploration step by one player forces a collision upon a specific, distinct player. Consequently, if $N - 1$ players each have a minimum gap of $\Delta_{\min}$, their collective exploration imposes $\Omega(N \log T / \Delta^2_{\min})$ total collisions -- and thus regret -- upon the remaining player.
\end{itemize}

\section{Related Work}\label{sec:literature}
\paragraph{Multi-player Bandit Learning}
The multi-player bandit problem involves multiple decentralized players interacting with a shared multi-armed bandit environment. When players pull the same arm simultaneously, a collision occurs, resulting in a loss. In settings where arm means vary across players, the benchmark is typically the \emph{maximum weight matching}, and the \emph{system regret} -- defined as the cumulative reward loss summed over all players -- measures performances. \citet{tibrewal2019distributed} proposed an ETC type algorithm where players exploit the best-estimated matching, \citet{mehrabian2020practical} combined forced collisions for implicit communication with matching eliminations, \citet{shi2021heterogeneous} adapted the CUCB algorithm~\citep{chen2013combinatorial} to this setting. For a comprehensive survey, see~\citet{boursier2024survey}.

While our model shares the reward and collision structure of heterogeneous multiplayer bandits, our benchmark differs fundamentally: we use the \emph{core} of the housing market —- a game-theoretic solution distinct from maximum weight matching -- and define \emph{individual regret} per player rather than aggregate system regret. This ensures fairness but introduces distinct algorithmic challenges.

\paragraph{Competing Bandits in Two-sided Matching Markets}
Bandit problems in matching markets were first formalized by \citet{das2005two}, with subsequent works~\citep{liu2020competing,liu2021bandit,sankararaman2021dominate,kong2023player,lin2025stable} exploring this model to achieve stable matching~\citep{gale1962college}. In these two-sided markets, players (with unknown utilities) and arms (with known preferences) interact -- when multiple players pull the same arm, only the top-ranked player by the arm's preference gets matched while others face collisions, with individual regrets defined accordingly. Since multiple stable matchings may exist, regrets are typically measured against either player-optimal or player-pessimal stable matchings. \looseness=-1 

In contrast, we study housing markets (one-sided matching) where only players have preferences over arms. This creates a distinct collision structure: when multiple players propose to the same arm, all receive zero reward. Moreover, the core matching in a housing markets is unique, ensuring an unambiguous regret definition. These differences from two-sided markets introduce novel algorithmic challenges.\looseness=-1

\section{Problem Setting}\label{sec:setting}
Denote $N$ as the number of players in the market, and every player has an initial endowed arm. Let $\gN = \left\{p_1, p_2, \cdots, p_N\right\}$ be the player set and $\gA = \left\{a_1, a_2, \cdots, a_N\right\}$ be the arm set.
The preferences of players on arms can be represented through a utility matrix $\mU$, where $\mU(i, j) \in \left[0, 1\right]$ denotes the preference of player $p_i$ on arm $a_j$. If $\mU(i, j) > \mU(i, j')$, player $p_i$ prefers arm $a_j$ over $a_{j'}$, and we denote as $a_j \succ_{i} a_{j'}$. In this paper, we assume all preferences are distinct, i.e., $\mU(i, j) \neq \mU(i, j')$ for any different arms $a_j \neq a_{j'}$.
Additionally, the utility vectors can vary entirely between players, reflecting heterogeneous preferences over arms.
In each round $t = 1, 2, \cdots$, every player $p_i$ proposes to an arm $A_i(t) \in \gA$. Each arm $a_j$ then receives applications from the set of players $A_j^{-1}(t) := \left\{p_i: A_i(t) = a_j\right\}$. Since arm $a_j$ is initially owned by player $p_j$, we assume only $p_j$ observes the application profile $A_j^{-1}(t)$, while other players remain unaware of this information. Arms are indivisible goods with no preferences over players, if multiple players propose to $a_j$ in the same round, i.e., $\abs{A_j^{-1}(t)} > 1$, a collision occurs, and $a_j$ is not matched to any player. In this case, the successfully matched player for $a_j$ is $\bar{A}_j^{-1} (t) = \emptyset$, and each proposing player $p_i \in A_j^{-1}(t)$ receives $\bar{A}_i(t) = \emptyset$, along with a deterministic reward $X_i(t) = 0$ indicating they are blocked.
Conversely, if only one player $p_i$ proposes to $a_j$, i.e., $\abs{A_j^{-1}(t)} = 1$, the match succeeds: $\bar{A}_j^{-1}(t) = p_i$, and $p_i$ receives a random reward $X_i(t)$ characterizing its matching experience in this round, which we assume is 1-subgaussian with expectation $\mU(i, \bar{A}_i(t))$. The set of all successfully matched player-arm pairs in round $t$ forms a matching, denoted $\mu_t$, where $\mu_t(p_i) = \bar{A}_i(t)$.

The core is a fundamental solution concept in housing markets~\citep{shapley1974cores}, which is the set of feasible allocations where no coalition of agents can benefit by breaking away from the grand coalition. Formally,
\begin{defn}[Core]\label{defn:core}
	A matching $\mu$ is in the core if no coalition $\gS \subseteq \gN$ can block $\mu$, i.e., there does not exist an alternative matching $\mu'$ such that $\mu'(p_i) \succ_i \mu(p_i), \forall p_i \in \gS$.
\end{defn}
In housing markets with strict preferences, the core is always nonempty, and consists of a unique matching~\citep{roth1977weak}, which we refer to as the \emph{core matching} and denote by $\mu^*$. Our objective is to learn $\mu^*$ while minimizing the \emph{core regret} for each player $p_i \in \gN$. This regret is defined as the cumulative difference, over $T$ rounds, between the expected rewards from being matched to $\mu^*(p_i)$ and the rewards actually obtained by $p_i$:
{\footnotesize
\begin{align}\label{eq:core_regret}
	Reg_i(T) = T \cdot \mU(i, \mu^*(p_i)) - \mathbb{E} \left[\sum_{t = 1}^T X_i(t)\right].
\end{align}
}
The expectation is taken over the randomness of the received reward and the players' strategy.

\paragraph{Offline Top-Trading-Cycle Algorithm}
In the \emph{offline} setting where all players know their exact preferences, the Top-Trading-Cycle (TTC) algorithm~\citep{shapley1974cores} efficiently computes the unique core allocation of the housing market -- a stable assignment where no coalition of players can improve their outcomes through mutual exchanges. The centralized TTC proceeds iteratively. First, each player identifies her most preferred available arm. Second, a directed graph is formed  where players point to owners of their top choices. Third, at least one cycle (including possible self-cycles) is identified and implemented, with involved players exiting the market. This process terminates within $N$ steps, guaranteed to produce a core allocation.

\section{Decentralized Algorithm}\label{sec:decentralize}
In this section, we present an Explore-then-Commit (ETC) algorithm for decentralized housing markets, operating in two phases. First, players explore arms in a round-robin fashion to estimate their preference rankings (Line~\ref{algline:begin_explore}-\ref{algline:end_explore}), ensuring reliable ordinal estimates for matching. Next, players use these estimates to identify their assigned arms in the core matching via a decentralized adaptation of the \emph{You Request My House, I Get Your Turn (YRMH-IGYT)} mechanism~\citep{abdulkadirouglu1999house} (Line~\ref{algline:begin_ttc}-\ref{algline:end_ttc}). Once matched, players commit exclusively to these arms in all future rounds.

\begin{algorithm}[tp]
	{\small 
		\caption{Decentralized Explore-then-YRMH-IGYT (from view of player $p_i$)}\label{alg:decentralize}
		\begin{algorithmic}[1]
			\Require player set $\gP$ and arm set $\gA$, horizon $T$.
			
			\State Initialize: $\hat{\mU}(i, j) = 0$, $T_{i, j} = 0$, $\forall j \in [N]$.
			
			\State //Phase 1, learn the preferences \label{algline:begin_explore}
			\For{$\ell = 1, 2, \cdots$}
			\State $P_\ell^{(i)} = $ False //whether the preference is well-estimated
			\For{$t = \sum_{\ell' = 1}^{\ell - 1} (2^{\ell'} + N) + 1, \cdots, \sum_{\ell' = 1}^{\ell - 1} (2^{\ell'} + N) + 2^{\ell}$} \label{algline:begin_round_robin}
			\State $A_i(t) = a_{(i + t - 1) \% N + 1}$. \label{algline:round_robin}
			\State Observe $X_{i, A_i(t)}(t)$, update $\hat{\mU}(i, A_i(t)), T_{i, A_i(t)}$. \label{algline:update}
			\EndFor \label{algline:end_round_robin}
			\State Compute $\text{UCB}_{i, j}$ and $\text{LCB}_{i, j}$ for each $j \in [N]$.
			\If{$\exists \sigma$ such that $\text{LCB}_{i, \sigma_{k}} > \text{UCB}_{i, \sigma_{k + 1}}$, $\forall k \in [N]$} \label{algline:rank}
			\State $P_{\ell}^{(i)} = $ True and $\sigma_i = \sigma$.
			\EndIf
			\For{$t = \sum_{\ell' = 1}^{\ell - 1} (2^{\ell'} + N) + 2^{\ell} + 1, \cdots, \sum_{\ell' = 1}^{\ell} (2^{\ell'} + N)$} \label{algline:begin_communication}
			\State $t' = t - \sum_{\ell' = 1}^{\ell - 1} (2^{\ell'} + N) - 2^{\ell}$
			\If{$P_{\ell}^{(i)} == $ True}
			\State $A_i(t) = a_{t'}$. \label{algline:communicate_propose}
			\Else
			\State $A_i(t) = \emptyset$. \label{algline:give_up}
			\EndIf
			\If{$i == t'$ and $\abs{A_i^{-1}(t)} == N$} \label{algline:estimate_well}
			\State Enter in Phase 2 with $\sigma_i = (\sigma_{i, 1}, \cdots, \sigma_{i, N})$. \label{algline:enter_second}
			\State $t_1 = \sum_{\ell' = 1}^{\ell} (2^{\ell'} + N)$ //Phase 1 ends.
			\EndIf
			\EndFor \label{algline:end_communication}
			\EndFor \label{algline:end_explore}
			\State //Phase 2, find the core matching arm with $\sigma_i$ \label{algline:begin_ttc}
			\State $t = t_1 + 1$
			\State $F_i = $ True //Whether the initial endowed arm is still available in the market, this flag is known to all players
			\State $k = 1$, $\gA_{k} = \left\{a_i: F_i == \text{True}\right\}$.
			\State $Propose(i) = $ False. //A flag indicated whether player $p_i$ proposed to some arm in the $k$-th epoch.
			\State $i_{\min}= \min( \left\{i | a_i \in \gA_k\right\})$.
			\While{$\abs{\gA_k} > 0$ and $F_i ==$ True}
			\State $j_{\min}^{(i)} = \min\left\{j | a_{\sigma_{i, j}} \in \gA_k\right\}$. //The best available arm for player $p_i$.
			\If{$t == t_1 + 1$ or $k(t) == k(t - 1) + 1$}
			\If{$i == i_{\min}$}
			\State $A_i(t) = a_{j_{\min}^{(i)}}$. \label{algline:leader_propose}
			\State $Propose(i) = $ True. \label{algline:leader_mark}
			\EndIf
			\ElsIf{$\abs{A_i^{-1}(t - 1)} > 0$}
			\State $A_i(t) = a_{j_{\min}^{(i)}}$, $Propose(i) = $ True.
			\State $j_r = \left\{j | a_j = A_i^{-1}(t - 1)\right\}$.
			\EndIf
			\If{($Propose(i) == $ True and $\abs{A_i^{-1}(t)} > 0$) or ($F_{j_r} == $ False)} //In a cycle \label{algline:cycle_complete}
			\State $F_i =$ False, $A_i(s) = a_{j_{\min}^{(i)}}$, $\forall s > t$. \label{algline:cycle_set}
			\EndIf
			\State $\gB = \left\{a_i: F_i == True\right\}$.
			\If{$\gB \neq \gA_k$} //The available arm set is updated
			\State $k = k + 1$, $\gA_k = \gB$, $Propose(i) =$ False.
			\State $i_{\min} = \min \left\{i | a_i \in \gA_{k}\right\}$.
			\EndIf
			\EndWhile \label{algline:end_ttc}
		\end{algorithmic}
	}%
\end{algorithm}

The first phase of the Algorithm~\ref{alg:decentralize} consists of multiple sub-phases $\ell = 1, 2, \cdots$, each lasting $2^{\ell} + N$ rounds (Line~\ref{algline:begin_explore}-\ref{algline:end_explore}). Each sub-phase $\ell$ begins with an exploration stage of $2^{\ell}$ rounds (Line~\ref{algline:begin_round_robin}-\ref{algline:end_round_robin}), followed by $N$ communication rounds (Line~\ref{algline:begin_communication}-\ref{algline:end_communication}).
During the exploration stage, players aim to gather sufficient observations about the arms. The subsequent communication rounds allow them to verify whether all $N$ players have accurately learned their preferences. Once this condition is met, players exit the exploration phase and proceed to the second phase, where they identify arms in the core matching (Line~\ref{algline:enter_second}).

During the exploration stage of each sub-phase, players follow a round-robin strategy to propose to arms (Line~\ref{algline:round_robin}). Since each player initially owns a distinct endowed arm with a unique index, this ensures that no two players select the same arm simultaneously. As a result, all proposals are successfully accepted without collisions. When player $p_i$ receives an observation from the selected arm $A_i(t)$, it updates both the estimated preference value $\hat{\mU}(i, A_i(t))$ and the observation count $T_{i, A_i(t)}$ for that arm (Line~\ref{algline:update}). The update rule is as follows,
{\footnotesize
\begin{align}\label{eq:empirical_mean}
	\hat{\mU}(i, A_i(t)) &= \frac{\hat{\mU}(i, A_i(t)) \cdot T_{i, A_i(t)} + X_{i, A_i(t)}(t)}{T_{i, A_i(t)} + 1}, \\
	T_{i, A_i(t)} &= T_{i, A_i(t)} + 1.
\end{align}
}
At the end of the exploration phase, players construct a confidence set for their estimated preference values using collected observations. Specifically, for player $p_i$, the confidence interval for the preference value of arm $a_j$ is defined by an upper bound (UCB) and a lower bound (LCB), given as \looseness=-1
{\footnotesize
\begin{align}\label{eq:confidence_bound}
	\begin{split}
		\text{UCB}_{i, j} &= \hat{\mU}(i, j) + \sqrt{\frac{6 \log T}{\max \left\{T_{i, j}, 1\right\}}}, \\
		\text{LCB}_{i,j} &= \hat{\mU}(i, j) - \sqrt{\frac{6 \log T}{\max \left\{T_{i, j}, 1\right\}}}.
	\end{split}
\end{align}
}
When the confidence intervals of two arms $a_j, a_{j'}$ become disjoint, that is, when $\text{LCB}_{i, j} > \text{UCB}_{i, j'}$ or vice versa, player $p_i$ can establish its strict preference ordering between them. Once $p_i$ successfully determines a complete preference ranking over all $N$ arms (Line~\ref{algline:rank}), it sets the flag $P_{\ell}^{(i)}$ to True for the current sub-phase and records this permutation as its estimated preference ranking $\sigma_i$.

Communication rounds enable players verify if all participants have accurately estimated their preference rankings. In the $i$-th communication round of every sub-phase, player $p_i$'s endowed arm serves as a broadcast channel. A player $p_j$ proposes to $p_i$'s arm \emph{if and only if} it has an accurate ranking (Line~\ref{algline:communicate_propose}); otherwise, it abstains (Line~\ref{algline:give_up}). While this may cause collisions, the goal is not to resolve them but to convey status information. If $p_i$ observes proposals from all $N$ players including itself (Line~\ref{algline:estimate_well}), it infers universal estimation success and triggers the transition to the second phase using its estimated ranking $\sigma_i$ (Line~\ref{algline:enter_second}).

The second phase (Line~\ref{algline:begin_ttc}-\ref{algline:end_ttc}) adapts the YRMH-IGYT algorithm~\citep{abdulkadirouglu1999house} to our decentralized setting for players to progressively identify their core allocations. The process unfolds in successive sub-phases. In each, the unassigned player with the smallest index initiates a proposal chain by proposing to their most preferred remaining arm (Lin~\ref{algline:leader_propose}-\ref{algline:leader_mark}). The recipient of a proposal then proposes to their own top choice among available arms, and this sequence continues. A top trading cycle is completed when a proposal is received by a player who is already part of the current chain (Line~\ref{algline:cycle_complete}). The player who closes the cycle -- the first to receive a repeat proposal -- then marks their endowed arm as allocated. This allocation status propagates backward through the chain. All players involved in the cycle permanently fix their allocations and remove their own endowed arms from the market (Line~\ref{algline:cycle_set}). The process repeats among the remaining players until all arms are allocated, thus achieving a stable core matching in a fully decentralized manner. \looseness=-1

\begin{rema}
	Algorithm~\ref{alg:decentralize} relies on standard multiplayer bandit assumptions for implicit coordination: (1) Unique, globally-known arm IDs, which also serve as player identifiers for round-robin exploration and proposal-chain initiation. (2) A shared global clock to synchronize the transition to Phase 2. (3) Common knowledge of the available arms in each round, enabling players to reconstruct the flag $F_i$.
\end{rema}

\subsection{Theoretical Analysis}\label{subsec:theory_decentralize}
Prior to presenting the formal regret analysis of Algorithm~\ref{alg:decentralize}, we introduce the notion of preference gaps to quantify the intrinsic difficulty of the learning problem.

\begin{defn}[Minimum Preference Gap]\label{def:pref_gap}
	For each player $p_i$ and arm $a_j \neq a_{j'}$, let $\Delta_{i, j, j'} = \mU(i, j)- \mU(i, j')$ be the preference gap of $p_i$ between $a_j$ and $a_{j'}$. Let $r_i$ be the preference ranking of player $p_i$ and $r_{i, k}$ be the $k$-th preferred arm in $p_i$'s ranking for $k \in [N]$. Define $\Delta_{\min} = \min_{i \in [N]; k \in [N]} \Delta_{i, r_{i, k}, r_{i, k + 1}}$ as the minimum preference gap among all workers and their preferences over the arms. $\Delta_{\min}$ is non-negative since all preferences are distinct.
\end{defn}

We now present the upper bound for the core regret for each player by following Algorithm~\ref{alg:decentralize}.

\begin{thm}\label{thm:decentralize_upper_bound}
	Following Algorithm~\ref{alg:decentralize}, the core regret of each player $p_i \in \gN$ satisfies
	{\scriptsize
	\begin{align*}
		Reg_i(T) 
		&\leq \left(\frac{192 N \log T}{\Delta_{\min}^2} + N \log \left(\frac{192 N \log T}{\Delta_{\min}^2}\right) + 3 N^2\right) \cdot \Delta_{i, \max} \\
		&= \gO \left(\frac{N \log T}{\Delta_{\min}^2}\right)
	\end{align*}
}
\end{thm}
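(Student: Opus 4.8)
The plan is to decompose the regret of player $p_i$ into three additive contributions, corresponding to the three phases of the algorithm's execution, and bound each separately against $\Delta_{i,\max}$ (the largest preference gap for $p_i$, so that every round in which $p_i$ fails to be matched to $\mu^*(p_i)$ contributes at most $\Delta_{i,\max}$ to the regret). The three pieces are: (a) the exploration rounds of Phase~1 (the round-robin stages of total length $\sum_{\ell'} 2^{\ell'}$ up to termination), (b) the communication rounds of Phase~1 (total length $N$ per sub-phase), and (c) the rounds of Phase~2 before $p_i$ fixes its allocation. Since in all of these rounds the per-round regret is at most $\Delta_{i,\max}$, it suffices to bound the \emph{number} of such rounds.

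First I would analyze Phase~1's length. The key event is that all $N$ players simultaneously have disjoint confidence intervals separating every consecutive pair of arms in their true ranking; by the definition of $\text{UCB}_{i,j}, \text{LCB}_{i,j}$ in~\eqref{eq:confidence_bound} and the sub-gaussian concentration (a union bound over $N$ players, $N$ arms, and $T$ rounds, using the $6\log T$ radius to make the failure probability $\le 1/T$ or so), once each relevant arm has been pulled at least $\Theta(\log T / \Delta_{\min}^2)$ times, the separating permutation $\sigma$ is found with high probability. Because the exploration stage of sub-phase $\ell$ contributes $\lfloor 2^\ell / N\rfloor$ (roughly) pulls to each arm, after the sub-phase $\ell^\star$ with $2^{\ell^\star} \approx \frac{24 N \log T}{\Delta_{\min}^2}$ every player has its ranking; the doubling schedule means the total exploration length is $\sum_{\ell \le \ell^\star} 2^\ell \le 2^{\ell^\star+1} \le \frac{96 N\log T}{\Delta_{\min}^2}$ — this is where the constant $192$ (after accounting for the "$+1$" sub-phase slack and the factor $6$ inside the root) comes from. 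I would need to check carefully that the termination test on Line~\ref{algline:rank} together with the communication handshake on Lines~\ref{algline:communicate_propose}–\ref{algline:estimate_well} correctly detects the "all players ready" event: a player enters Phase~2 exactly when it sees all $N$ proposals on its own arm, which happens precisely in the first communication block after every $P_\ell$ is True, so all players transition at the same sub-phase thanks to the shared clock.

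Next, the communication rounds: there are $N$ per sub-phase and the number of sub-phases executed is $\ell^\star = O(\log(N\log T / \Delta_{\min}^2))$, so this contributes $N \cdot \ell^\star = N\log\!\big(\frac{192 N\log T}{\Delta_{\min}^2}\big)$ rounds — matching the middle term in the bound. For Phase~2, I would argue that the decentralized YRMH-IGYT emulation terminates within $O(N^2)$ rounds: the offline TTC runs in at most $N$ cycle-eliminations, each cycle is discovered by a proposal chain of length at most $N$, and each step of the chain costs $O(1)$ synchronized rounds, so $p_i$ is matched to $\mu^*(p_i)$ within $O(N^2)$ rounds; here I must also invoke that the $\sigma_i$ all equal the true rankings (on the high-probability event), so the emulated TTC produces exactly the core matching $\mu^*$ (uniqueness from~\citep{roth1977weak}). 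This gives the $3N^2$ term. Finally, I would handle the low-probability failure event (confidence intervals wrong, or a wrong ranking): since its probability is $\le O(1/T)$ or $O(N^2/T)$, and regret is trivially at most $T\Delta_{i,\max}$ on that event, it contributes only $O(\Delta_{i,\max})$ — absorbed into the constants.

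The main obstacle I expect is the Phase~1 termination analysis: I must simultaneously (i) control the \emph{good} event that the doubling schedule provides enough pulls per arm uniformly over all $N$ players by sub-phase $\ell^\star$, (ii) rule out \emph{premature} termination with an incorrect $\sigma$ (which is what actually forces the factor-$6$ confidence radius so that a union bound over all $O(T)$ rounds still leaves failure probability $o(1)$), and (iii) verify that the communication handshake is sound and complete, i.e., no player transitions to Phase~2 while another is still exploring, and conversely every player does transition once all are ready — this synchronization argument, relying on the shared clock and globally known arm IDs, is the delicate combinatorial heart of the proof. Everything after the transition (Phase~2 correctness and length) is a deterministic emulation argument and should be routine given the offline TTC guarantees.
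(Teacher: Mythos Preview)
Your proposal is correct and follows essentially the same approach as the paper: condition on the good concentration event, bound the total number of Phase~1 exploration rounds via the doubling schedule (Lemma~\ref{lem:exploration_length}), the communication rounds as $N\ell_{\max}$, and the Phase~2 rounds by $N^2$ via the offline YRMH-IGYT analysis (Lemma~\ref{lem:yrmh_igyt_step}), then add the bad-event contribution. One small bookkeeping correction: the $3N^2$ term in the statement is not all from Phase~2 --- the paper gets $N^2$ from Phase~2 and $2N^2$ from the bad event (since $\mathbb{P}(\gF)\le 2N^2/T$, its contribution is $2N^2\Delta_{i,\max}$, not $O(\Delta_{i,\max})$ as you wrote).
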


The regret bound consists of three key components. First, the cumulative regret from exploration rounds in phase 1. Second, the regret from communication rounds in phase 1. The final term combines two distinct elements: (1) the regret during phase 2, where the YRMH-IGYT mechanism guarantees convergence to the core matching within at most $N$ epochs (with each epoch requiring $\leq N$ rounds to identify a top trading cycle), and (2) the regret contribution from rare concentration failure events.

For convenience, let $\hat{\mU}^{(t)}(i, j), T_{i, j}^{(t)}, UCB_{i, j}^{(t)}, LCB_{i, j}^{(t)}$ be the value of $\hat{\mU}(i, j), T_{i, j}, UCB_{i, j}, LCB_{i, j}$ at the end of round $t$. Define
	{\scriptsize
\begin{align*}
	\gF = \left\{\exists t \in [T], i \in [N], j \in [N]: \abs{\hat{\mU}^{(t)}(i, j) - \mU(i, j)} > \sqrt{\frac{6 \log T}{T_{i, j}^{(t)}}}\right\}
\end{align*}
}
as the bad event that some preference is not estimated well during the horizon.
Since all players would communicate whether they have a precise enough estimation after each sub-phase of phase 1, and determine to enter phase 2 once they find every player include themselves have good-enough estimations, we can conclude that all players would enter phase 2 at the same time. Denote $\ell_{\max}$ as the largest sub-phase number of phase 1. That is to say, players enter in phase 2 at the end of sub-phase $\ell_{\max}$. We then provide the proof of Theorem~\ref{thm:decentralize_upper_bound} as follows.

\begin{proof}[Proof of Theorem~\ref{thm:decentralize_upper_bound}]
	Let $\Delta_{i, \max}$ be the maximum core regret that may be suffered by player $p_i$ in all rounds, we have $\Delta_{i, \max} \leq 1$ by assumption on the utility matrix. The core regret of each player $p_i$ by following Algorithm~\ref{alg:decentralize} satisfies
	{\scriptsize
		\begin{align}
			Reg_i(T) =& \mathbb{E} \left[\sum_{t = 1}^T \left(\mU(i, \mu^*(p_i))\right) - X_i(t)\right] \nonumber \\
			\leq& \mathbb{E} \left[\sum_{t = 1}^T \ind{\mu_t(i) \neq \mu^*(p_i)} \cdot \Delta_{i, \max}\right] \label{eq:unique_core} \\
			\leq& \mathbb{E} \left[\sum_{t = 1}^T \ind{\mu_t(i) \neq \mu^*(p_i)} \middle| \urcorner \gF\right] \cdot \Delta_{i, \max} \nonumber \\
			&+ \sP(\gF) \cdot T \cdot \Delta_{i, \max} \nonumber \\
			\leq& \mathbb{E} \left[\sum_{t = 1}^T \ind{\mu_t(i) \neq \mu^*(p_i)} \middle| \urcorner \gF\right] \cdot \Delta_{i, \max}
			+ 2 N^2 \Delta_{i, \max} \label{eq:bad_event_prob} \\
			\leq& \mathbb{E} \left[\sum_{\ell = 1}^{\ell_{\max}} \left(2^{\ell} + N\right) + N^2 \middle| \urcorner \gF\right] \cdot \Delta_{i, \max}
			+ 2 N^2 \Delta_{i, \max} \label{eq:exploration_bound} \\
			\leq& \left(\frac{192 N \log T}{\Delta_{\min}^2} + N \log \left(\frac{192 N \log T}{\Delta_{\min}^2}\right)\right) \cdot \Delta_{i, \max} \nonumber \\ 
			&+ 3 N^2 \Delta_{i, \max}, \label{eq:exploration_length}
		\end{align}
	}
	where Eq.(\ref{eq:unique_core}) comes from the fact that in a housing market, there is a unique core matching and hence a unique core matching partner $\mu^*(p_i)$ for player $p_i$. Eq.(\ref{eq:bad_event_prob}) holds based on Lemma~\ref{lem:bad_concentration_prob}. Eq.(\ref{eq:exploration_bound}) holds according to Algorithm~\ref{alg:decentralize} and the fact that we need at most $N^2$ rounds for the YRMH-IGYT procedure to reach the equilibrium (Lemma~\ref{lem:yrmh_igyt_step}). Eq.(\ref{eq:exploration_length}) follows from Lemma~\ref{lem:exploration_length}.
\end{proof}


The following technical lemmas underlying our analysis. Proofs are provided in Appendix~\ref{sec:proof_upper_bound_decentralize}. We begin with Lemma~\ref{lem:bad_concentration_prob}, which bounds the probability of erroneous preference estimation.

\begin{lem}[Bad Concentration Event]\label{lem:bad_concentration_prob}
	$\sP \left(\gF\right) \leq \frac{2 N^2}{T}$.
\end{lem}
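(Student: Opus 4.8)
The plan is to bound $\sP(\gF)$ by a union bound over the triples $(t,i,j)$ and a standard subgaussian concentration (Hoeffding) inequality, being careful that $T_{i,j}^{(t)}$ is a random quantity. First I would fix $i$ and $j$ and think of the successful observations of arm $a_j$ by player $p_i$ as a fixed sequence of i.i.d.\ $1$-subgaussian samples $Y_1, Y_2, \ldots$ with mean $\mU(i,j)$; after $s$ such observations the empirical mean is $\hat{\mU}_s = \tfrac{1}{s}\sum_{m=1}^s Y_m$. The event inside $\gF$ for this $(i,j)$, intersected over all rounds $t$, is contained in $\bigcup_{s=1}^{T}\{|\hat{\mU}_s - \mU(i,j)| > \sqrt{6\log T / s}\}$, since at any round $t$ the count $T_{i,j}^{(t)}$ takes some value $s \in \{0,1,\ldots,T\}$ and (for $s\ge 1$) the deviation bound in $\gF$ is exactly the event $|\hat{\mU}_s - \mU(i,j)| > \sqrt{6\log T/s}$ (the $s=0$ case contributes nothing because then $\hat\mU^{(t)}(i,j) = 0 = \mU(i,j)$ would need checking — more cleanly, $\gF$ only constrains $T_{i,j}^{(t)}\ge 1$ since otherwise the right-hand side is undefined/infinite, so we may restrict to $s\ge 1$).

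Next, for each fixed $s\ge 1$, the two-sided Hoeffding bound for the average of $s$ independent $1$-subgaussian variables gives
\begin{align*}
	\sP\left(|\hat{\mU}_s - \mU(i,j)| > \sqrt{\tfrac{6\log T}{s}}\right) \le 2\exp\left(-\tfrac{s}{2}\cdot\tfrac{6\log T}{s}\right) = 2\exp(-3\log T) = \tfrac{2}{T^3}.
\end{align*}
Summing over $s = 1, \ldots, T$ yields $\sP\big(\bigcup_s \{\cdots\}\big) \le T \cdot \tfrac{2}{T^3} = \tfrac{2}{T^2}$ for each pair $(i,j)$. Finally, a union bound over the $N$ choices of $i$ and the $N$ choices of $j$ gives $\sP(\gF) \le N^2 \cdot \tfrac{2}{T^2} = \tfrac{2N^2}{T^2} \le \tfrac{2N^2}{T}$, which is the claimed bound (in fact slightly stronger).

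The only subtle point — and the part I would be most careful about — is the reduction from "deviation at some round $t$ with random count $T_{i,j}^{(t)}$" to "deviation at some fixed sample size $s$". Because $\hat{\mU}^{(t)}(i,j)$ is always equal to $\hat{\mU}_{T_{i,j}^{(t)}}$ for the underlying i.i.d.\ stream (the algorithm's running average in Eq.~\eqref{eq:empirical_mean} is literally the sample mean of the observations received so far), no martingale/maximal inequality is actually needed: the union over $t$ collapses into a union over the at most $T$ possible values of the count, and each term is handled by the plain Hoeffding bound above. I would state this reduction explicitly as a one-line observation before invoking concentration, so that the randomness of $T_{i,j}^{(t)}$ causes no trouble.
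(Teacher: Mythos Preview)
Your proof is correct and follows essentially the same approach as the paper: a union bound over $(i,j)$, a reduction of the random count $T_{i,j}^{(t)}$ to a union over its possible values $s$, and the two-sided Hoeffding bound $2\exp(-3\log T)=2/T^3$ for each $s$. The only difference is bookkeeping: the paper first union-bounds over $t\in[T]$ and then over $s\le t$, picking up a factor $\sum_{t=1}^{T} t \le T^{2}$ and arriving exactly at $2N^{2}/T$, whereas you observe that the union over $t$ collapses directly into the union over $s\in\{1,\dots,T\}$, giving the sharper $2N^{2}/T^{2}\le 2N^{2}/T$.
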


Lemma~\ref{lem:yrmh_igyt_step} shows that each player $p_i$ finds their core-matched arm $\mu^*(p_i)$ within at most $N^2$ rounds in Phase 2. \looseness=-1

\begin{lem}\label{lem:yrmh_igyt_step}
	Conditional on $\urcorner \gF$, at most $N^2$ rounds are needed in phase 2 before $A_i(t) = \mu^*(p_i)$, and in all of the following rounds, $A_i(t)$ would not be updated and $p_i$ would always be successfully accepted by $\mu^*(p_i)$.
\end{lem}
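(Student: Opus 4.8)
The plan is to verify, conditional on $\urcorner\gF$, that the decentralized Phase 2 procedure in Algorithm~\ref{alg:decentralize} faithfully simulates the YRMH-IGYT / TTC mechanism on the true preference rankings, and then to invoke the classical fact that TTC terminates in at most $N$ cycle-removal epochs. First I would observe that under $\urcorner\gF$ every player $p_i$ has disjoint confidence intervals for all pairs of arms after Phase 1 (this is exactly what the exit condition on Line~\ref{algline:rank} together with $\urcorner\gF$ guarantees), so the recorded permutation $\sigma_i$ coincides with the true ranking $r_i$. Hence in Phase 2 every player proposes to its genuinely most-preferred arm among the still-available set $\gA_k$, which is precisely the TTC pointer. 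I would then argue, by induction on the epoch counter $k$, that the set $\{a_i : F_i = \text{True}\}$ maintained (identically, by the common-knowledge assumption) by all players equals the set of arms not yet assigned by TTC after $k-1$ cycles, and that within each epoch the proposal chain started by the smallest-index unassigned player traces out a path in the TTC pointer graph that must close into a cycle within at most $N$ steps (since there are at most $N$ players and the out-degree is one). When the chain closes, the detection condition on Line~\ref{algline:cycle_complete} fires, every player on the cycle sets $F_i=\text{False}$ and freezes $A_i(s)=a_{j_{\min}^{(i)}}$ for all $s>t$, which is their TTC allocation, i.e.\ $\mu^*(p_i)$.

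The counting then goes as follows: there are at most $N$ epochs (each epoch removes at least one player, by the standard TTC argument that a top-trading cycle always exists), and within each epoch the chain-building plus cycle-detection takes at most $N$ rounds, for a total of at most $N^2$ rounds before the while-loop variable $\gA_k$ becomes empty. After a player $p_i$ is frozen, the last line of the cycle-handling block sets $A_i(s)=a_{j_{\min}^{(i)}}=\mu^*(p_i)$ for every subsequent round $s$, so $p_i$ never changes its proposal again. It remains to check that $p_i$ is then always successfully matched to $\mu^*(p_i)$: since $\mu^*$ is a matching (a permutation), no other player has $\mu^*(p_j)=\mu^*(p_i)$, and once all cycles are resolved every player proposes to a distinct arm, so $\abs{A_{\mu^*(p_i)}^{-1}(s)}=1$ and $p_i$ receives a genuine reward with mean $\mU(i,\mu^*(p_i))$.

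The main obstacle I anticipate is the bookkeeping around the decentralized cycle-detection logic on Lines~\ref{algline:leader_propose}--\ref{algline:cycle_set}: one must check carefully that the flags $Propose(i)$, the "predecessor arm'' index $j_r$, and the backward propagation of $F_{j_r}=\text{False}$ correctly identify exactly the members of the top trading cycle, neither freezing a player outside the cycle nor missing one, and that this happens synchronously across all players using only the common-knowledge information (available arm set, global clock). Making precise that a player who has \emph{not yet} been proposed to does not erroneously act, and that the epoch counter $k$ advances in lockstep for everyone, is the delicate part; once that simulation-correctness claim is nailed down, the $N^2$ bound and the "stays matched forever'' conclusion are immediate from the structure of TTC and the fact that $\mu^*$ is a permutation.
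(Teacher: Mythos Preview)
Your proposal is correct and follows essentially the same approach as the paper: first use $\urcorner\gF$ and the Phase~1 exit condition to conclude that every $\sigma_i$ equals the true ranking, then argue that the decentralized Phase~2 simulates TTC/YRMH-IGYT on true preferences and terminates in at most $N$ epochs of at most $N$ rounds each, and finally note that $\mu^*$ is a permutation so frozen players never collide. The paper factors the argument into separate sub-lemmas (one for the correctness of $\sigma_i$, one for the offline $N^2$ step count, one for cycle detection), whereas you inline it, but the substance is the same.
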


Lemma~\ref{lem:exploration_length} bounds the duration of the exploration phase.

\begin{lem}\label{lem:exploration_length}
	Condition on $\urcorner \gF$, phase 1 will proceed in at most $\ell_{\max}$ sub-phases where 
	{\footnotesize
	\begin{align}\label{eq:explore_epoch}
		\ell_{\max} = \min \left\{\ell: \sum_{\ell' = 1}^{\ell} 2^{\ell'} \geq \frac{96 N \log T}{\Delta_{\min}^2}\right\},
	\end{align}
}
\end{lem}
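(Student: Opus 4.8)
The plan is to argue entirely on the complement event $\urcorner\gF$ and to pin down, for a fixed player $p_i$, the first sub-phase $\ell$ at which the flag $P_\ell$ is forced to be \textbf{True}. Recall that $P_\ell$ is set to \textbf{True} exactly when $p_i$ can exhibit a permutation $\sigma$ with $\text{LCB}_{i,\sigma_k}>\text{UCB}_{i,\sigma_{k+1}}$ for all consecutive $k$, i.e.\ when the empirical confidence intervals separate consecutive arms. Since each player broadcasts its $P_\ell$ status over the $N$ communication rounds and transitions to Phase~2 only after observing proposals from all $N$ players on its own arm, and since (as already noted before the proof of Theorem~\ref{thm:decentralize_upper_bound}) this makes all players enter Phase~2 simultaneously, it suffices to upper bound, uniformly in $i$, the index of the sub-phase by which $P_\ell$ is guaranteed to be \textbf{True}.

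Two quantitative ingredients drive the bound. First, on $\urcorner\gF$, taking $\sigma=r_i$ (the true ranking, which is the only admissible choice) and writing $t$ for the last round of the exploration stage of sub-phase $\ell$, we have $\lvert\hat{\mU}^{(t)}(i,j)-\mU(i,j)\rvert\le\sqrt{6\log T/T_{i,j}^{(t)}}$, hence for consecutive arms
\[
\text{LCB}_{i,r_{i,k}}^{(t)}-\text{UCB}_{i,r_{i,k+1}}^{(t)}\ \ge\ \Delta_{i,r_{i,k},r_{i,k+1}}-2\sqrt{\tfrac{6\log T}{T_{i,r_{i,k}}^{(t)}}}-2\sqrt{\tfrac{6\log T}{T_{i,r_{i,k+1}}^{(t)}}}\ \ge\ \Delta_{\min}-4\sqrt{\tfrac{6\log T}{\min_j T_{i,j}^{(t)}}},
\]
which is positive as soon as $\min_j T_{i,j}^{(t)}>96\log T/\Delta_{\min}^2$; so $P_\ell$ is \textbf{True} once every arm has accumulated that many observations. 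Second, during the exploration stage of sub-phase $\ell'$ — a contiguous block of $2^{\ell'}$ collision-free round-robin rounds — player $p_i$ collects at least $\lfloor 2^{\ell'}/N\rfloor$ fresh observations of each arm, so after the exploration stages of sub-phases $1,\dots,\ell$ we get $\min_j T_{i,j}^{(t)}\ge\sum_{\ell'=1}^{\ell}\lfloor 2^{\ell'}/N\rfloor\ge\frac1N\sum_{\ell'=1}^{\ell}2^{\ell'}-\ell$. Combining the two ingredients, $P_\ell$ becomes \textbf{True} for $p_i$ as soon as $\sum_{\ell'=1}^{\ell}2^{\ell'}\ge 96N\log T/\Delta_{\min}^2$, which is precisely the definition of $\ell_{\max}$ in Eq.~(\ref{eq:explore_epoch}); since this threshold is the same for every player (the round-robin gives all players between $\lfloor 2^{\ell'}/N\rfloor$ and $\lceil 2^{\ell'}/N\rceil$ observations per arm), Phase~1 ends by sub-phase $\ell_{\max}$.

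The only place that needs care is the boundary bookkeeping: the floors $\lfloor 2^{\ell'}/N\rfloor$ (sub-phases with $2^{\ell'}<N$ contribute nothing) and the strict-versus-weak separation inequality introduce an additive slack of order $N\ell_{\max}$ between ``$\sum 2^{\ell'}\ge 96N\log T/\Delta_{\min}^2$'' and ``$\min_j T_{i,j}^{(t)}>96\log T/\Delta_{\min}^2$''. The plan is to absorb this slack into the already-present $N\log\!\big(192N\log T/\Delta_{\min}^2\big)$ (equivalently $N\ell_{\max}$, the communication cost) and $N^2$ terms of Theorem~\ref{thm:decentralize_upper_bound}, and to record the consequence $\sum_{\ell'=1}^{\ell_{\max}}2^{\ell'}=2^{\ell_{\max}+1}-2<2\cdot 96N\log T/\Delta_{\min}^2+2$, which is where the constant $192$ in that theorem originates. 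Everything else is a routine geometric-series estimate.
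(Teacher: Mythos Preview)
Your approach is essentially the same as the paper's: both argue that round-robin exploration accumulates enough samples per arm so that, on $\urcorner\gF$, the confidence intervals of consecutive arms in the true ranking separate, forcing $P_\ell=$ True for every player by sub-phase $\ell_{\max}$. The paper packages the separation step into a separate lemma (if $\min_j T_{i,j}^{(t)}>96\log T/\Delta_{\min}^2$ then $\text{LCB}_{i,r_{i,k}}>\text{UCB}_{i,r_{i,k+1}}$), which you have simply inlined; otherwise the skeletons are identical.

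The one place you go beyond the paper is the floor bookkeeping: the paper asserts $T_{i,j}\ge 96\log T/\Delta_{\min}^2$ directly from $\sum_{\ell'=1}^{\ell_{\max}}2^{\ell'}\ge 96N\log T/\Delta_{\min}^2$ without mentioning the $\lfloor\cdot/N\rfloor$ loss, whereas you flag the additive $N\ell_{\max}$ slack and propose to absorb it into the existing lower-order terms of Theorem~\ref{thm:decentralize_upper_bound}. That is the right instinct---the slack is indeed dominated by the $N\log(192N\log T/\Delta_{\min}^2)$ communication term---but note that as stated the lemma fixes $\ell_{\max}$ exactly via Eq.~(\ref{eq:explore_epoch}), so strictly speaking neither proof quite closes the gap at the level of the lemma itself; both rely on the slack being harmless downstream.
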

which implies that $\sum_{\ell' = 1}^{\ell_{\max}} 2^{\ell'} \leq \frac{192 N \log T}{\Delta_{\min}^2}$ and $\ell_{\max} = \log \left(\frac{192 N \log T}{\Delta_{\min}^2}\right)$ since the sub-phase length grows exponentially. And all players will enter in phase 2 simultaneously at the end of sub-phase $\ell_{\max}$.

\section{Regret Lower Bound}\label{sec:lower_bound}
We establish a regret lower bound by adapting the method of \citet{auer2002nonstochastic} to our multi-player setting, demonstrating the tightness of the regret upper bound of Algorithm~\ref{alg:decentralize}. Let $Reg(T; \boldsymbol{\nu}, \pi)$ be the cumulative expected regret of policy $\pi$ over all players and time $T$, for an instance with arm distributions $\boldsymbol{\nu} = \left\{\nu_{ij}: i, j \in [N]\right\}$. Denote by $\gP$ the set of all probability distributions with support in $[0, 1]$. For any $\boldsymbol{\nu} \in \gP$, define $D_{\inf}(\boldsymbol{\nu}, x, \gP) = \inf_{\boldsymbol{\nu}' \in \gP} \left\{D(\boldsymbol{\nu}, \boldsymbol{\nu}'): \rho(\boldsymbol{\nu}') > x\right\}$, where $\rho: \gP \rightarrow \sR$ maps a distribution to its mean and $D(\cdot, \cdot)$ is the KL divergence.

\begin{defn}[Uniformly Consistent Policies]\label{def:consistent_policy}
	A policy $\pi$ is uniformly consistent \emph{if and only if} for all $\boldsymbol{\nu} \in \gP$, all $\alpha \in (0, 1)$, the regret $\limsup_{T \to \infty} \frac{Reg(T; \boldsymbol{\nu}, \pi)}{T^{\alpha}} = 0$.
\end{defn}
This ensures the policy is not overly tuned to the current instance at the expense of performance in others, ensuring robust performance across different problem instances~\citep{auer2002nonstochastic,lattimore2020bandit}.

\paragraph{Single Top Trading Cycle Bandits} 
Our regret lower bound applies to a subclass of bandits characterized by a single top trading cycle (STTCB). In these instances, each player has a unique top-ranked arm, and there exists a permutation $\sigma$ of players such that $\mu^*(p_{\sigma_i}) = a_{\sigma_{i + 1}}$ for $i = 1, \cdots, N - 1$ and $\mu^*(p_{\sigma_N}) = a_{\sigma_1}$. This structure ensures all players form one top trading cycle in the core matching. For each player $p_i$, define the gap $\Delta_{j}^{(i)} = \mU(i, \mu^*(p_i)) - \mU(i, j)$ and the minimum gap $\Delta_{\min}^{(i)} = \min_{a_j \neq \mu^*(p_i)} \mU(i, \mu^*(p_i)) - \mU(i, j)$, which are non-negative in STTCB instances.

\begin{lem}[Regret Decomposition]\label{lem:regret_decompose}
	For an STTCB instance $\boldsymbol{\nu} = \left\{\nu_{i, j}: i \in [N], j \in [N]\right\}$, and any uniformly consistent policy $\pi$, player $p_i$ for $i \in [N]$ the following holds
	{\scriptsize
		\begin{align*}
			Reg_i(T; \boldsymbol{\nu}, \pi) \geq \max \Biggl\{&\sum_{i' \neq i} 	\Delta_{\min}^{(i)} \mathbb{E}_{\boldsymbol{\nu}, \pi} \left[N_{\mu^*(p_i)}^{(i')}(T)\right],\\
			&\sum_{a_j \neq \mu^*(p_i)} \Delta_{j}^{(i)} \mathbb{E}_{\boldsymbol{\nu}, \pi} \left[N_{j}^{(i)}(T)\right]\Biggr\}.
		\end{align*}
	}
\end{lem}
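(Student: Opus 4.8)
The plan is to turn $Reg_i(T;\boldsymbol{\nu},\pi)$ into an exact sum of per-round expected losses and then lower bound that sum in two different ways: once by what $p_i$ itself does (this yields the second term of the maximum) and once by what the other players do to $p_i$'s core arm (this yields the first term).

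First I would rewrite the reward. Conditioned on the history up to round $t$ and on the round-$t$ proposals, $X_i(t)$ has mean $\mU(i,\bar{A}_i(t))$ when $p_i$ is successfully matched and mean $0$ otherwise. Writing $N_j^{(i)}(T)$ for the number of rounds $t\le T$ in which $p_i$ is successfully matched to $a_j$, and $B_i(T)$ for the number of rounds in which $p_i$ is matched to no arm (blocked or idle), this gives $\mathbb{E}\left[\sum_{t=1}^T X_i(t)\right]=\sum_{j\in[N]}\mU(i,j)\,\mathbb{E}\left[N_j^{(i)}(T)\right]$ and $\sum_{j\in[N]}\mathbb{E}\left[N_j^{(i)}(T)\right]+\mathbb{E}\left[B_i(T)\right]=T$. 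Substituting into \eqref{eq:core_regret} and using $\Delta^{(i)}_{\mu^*(p_i)}=0$ produces the exact identity
\[
Reg_i(T;\boldsymbol{\nu},\pi)=\sum_{a_j\neq\mu^*(p_i)}\Delta^{(i)}_j\,\mathbb{E}\left[N_j^{(i)}(T)\right]+\mU(i,\mu^*(p_i))\,\mathbb{E}\left[B_i(T)\right].
\]
Both terms are non-negative, so discarding the second one already gives the second bound $Reg_i(T)\ge\sum_{a_j\neq\mu^*(p_i)}\Delta^{(i)}_j\,\mathbb{E}[N_j^{(i)}(T)]$.

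For the first bound I would use the STTCB structure: $\mu^*(p_i)$ is $p_i$'s unique most-preferred arm, hence $\Delta^{(i)}_j\ge\Delta^{(i)}_{\min}$ for every $a_j\neq\mu^*(p_i)$ and also $\mU(i,\mu^*(p_i))\ge\Delta^{(i)}_{\min}$ (utilities are non-negative). Reading the identity above round-by-round, every round in which $p_i$ is not matched to $\mu^*(p_i)$ --- whether it gets a worse arm or nothing at all --- contributes at least $\Delta^{(i)}_{\min}$, so $Reg_i(T)\ge\Delta^{(i)}_{\min}\,\mathbb{E}\big[\#\{t\le T:\mu_t(p_i)\neq\mu^*(p_i)\}\big]$. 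It then remains to compare this count with $\sum_{i'\neq i}N^{(i')}_{\mu^*(p_i)}(T)$: in any round $t$ in which some $p_{i'}$ with $i'\neq i$ is successfully matched to $\mu^*(p_i)$, that arm is already taken, so $\mu_t(p_i)\neq\mu^*(p_i)$; and since an arm matches at most one proposer per round, the events $\{\mu_t(p_{i'})=\mu^*(p_i)\}$ for distinct $i'$ are disjoint, whence $\sum_{i'\neq i}N^{(i')}_{\mu^*(p_i)}(T)=\#\{t\le T:\mu_t(p_{i'})=\mu^*(p_i)\text{ for some }i'\neq i\}\le\#\{t\le T:\mu_t(p_i)\neq\mu^*(p_i)\}$. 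Taking expectations and chaining gives $Reg_i(T)\ge\Delta^{(i)}_{\min}\sum_{i'\neq i}\mathbb{E}[N^{(i')}_{\mu^*(p_i)}(T)]$, and combining the two bounds yields the claimed maximum.

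The reward bookkeeping is routine; the step that needs care is the no-double-counting argument in the first bound --- making sure that summing the successful-match counts of $\mu^*(p_i)$ over the other players genuinely counts distinct rounds, which is precisely where one-sidedness of the market (an arm matches at most one proposer) is used --- together with the observation that blocked or idle rounds must be retained as contributing a non-negative, in fact $\ge\Delta^{(i)}_{\min}$, amount rather than silently dropped, since otherwise the chain of inequalities for the first bound would not close. Uniform consistency of $\pi$ is not needed for this lemma; it enters only in the subsequent change-of-measure step that lower-bounds $\mathbb{E}[N^{(i')}_{\mu^*(p_i)}(T)]$ and $\mathbb{E}[N^{(i)}_j(T)]$.
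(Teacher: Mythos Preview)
Your proposal is correct and follows essentially the same approach as the paper: decompose regret into successful-pull gaps plus a non-negative ``unmatched'' term, drop the latter for the second bound, and for the first bound use that in an STTCB every round where $p_i$ misses $\mu^*(p_i)$ costs at least $\Delta_{\min}^{(i)}$. Your version is in fact slightly cleaner than the paper's on the first bound --- the paper routes through the number of \emph{proposals} $\bar{N}_{\mu^*(p_i)}^{(i')}$ before dropping to successful matches, whereas you go directly to successful matches and make the disjointness (no double counting across $i'$) explicit; your remark that uniform consistency is unused here is also accurate.
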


For an STTCB instance, when the $p_i$'s core-matched partner $\mu^*(p_i)$ is allocated to another player, the optimal outcome is for $p_i$ to receive its second-best arm. While this event occurs infrequently, it provides a sufficient bound for the first term above.

\begin{thm}\label{thm:lower_bound}
	For any player $p_i$, $i \in [N]$, and under any decentralized uniformly consistent algorithm $\pi$, the performance on an STTCB instance $\boldsymbol{\nu}$ satisfies
	{\footnotesize
		\begin{align}\label{eq:lower_bound}
			\begin{split}
				&\liminf_{T \to \infty} \frac{Reg_i(T; \boldsymbol{\nu}, \pi)}{\log T} \\
				\geq \max \Biggl\{&\sum_{i' \neq i} \frac{\Delta_{\min}^{(i)}}{D_{\inf}(\boldsymbol{\nu}_{i', \mu^*(p_i)}, \mU(i', \mu^*(p_{i'})), \gP)}, \\ 
				&\sum_{a_j \neq \mu^*(p_i)} \frac{\Delta_{j}^{(i)}}{D_{\inf}(\boldsymbol{\nu}_{i, j}, \mU(i, \mu^*(p_i)), \gP)}\Biggr\}.
			\end{split}
		\end{align}
	}
\end{thm}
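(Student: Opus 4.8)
The plan is to reduce \eqref{eq:lower_bound} to two asymptotic occupation‑count lower bounds via Lemma~\ref{lem:regret_decompose}, and to prove each of those by a change‑of‑measure argument in the spirit of \citet{burnetas1996optimal}. It suffices to establish, for every $a_j\neq\mu^*(p_i)$,
\[
\liminf_{T\to\infty}\frac{\mathbb{E}_{\boldsymbol{\nu},\pi}\!\left[N_{j}^{(i)}(T)\right]}{\log T}\ \ge\ \frac{1}{D_{\inf}\!\big(\boldsymbol{\nu}_{i,j},\,\mU(i,\mu^*(p_i)),\,\gP\big)},
\]
and, for every $i'\neq i$,
\[
\liminf_{T\to\infty}\frac{\mathbb{E}_{\boldsymbol{\nu},\pi}\!\left[N_{\mu^*(p_i)}^{(i')}(T)\right]}{\log T}\ \ge\ \frac{1}{D_{\inf}\!\big(\boldsymbol{\nu}_{i',\mu^*(p_i)},\,\mU(i',\mu^*(p_{i'})),\,\gP\big)}.
\]
Substituting these into the two branches of the maximum in Lemma~\ref{lem:regret_decompose}, dividing by $\log T$ and taking $\liminf$, then yields \eqref{eq:lower_bound}, since the constants $\Delta_{\min}^{(i)}$ and $\Delta_{j}^{(i)}$ are fixed and pass through.

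For the second bound (the classical, single‑player–style term), fix $a_j\neq\mu^*(p_i)$ and $\eta>0$. Choose $\nu'_{i,j}\in\gP$ with $\rho(\nu'_{i,j})>\mU(i,\mu^*(p_i))$ and $D(\boldsymbol{\nu}_{i,j},\nu'_{i,j})\le D_{\inf}(\boldsymbol{\nu}_{i,j},\mU(i,\mu^*(p_i)),\gP)+\eta$, and let $\boldsymbol{\nu}'$ be $\boldsymbol{\nu}$ with the $(i,j)$ entry replaced by $\nu'_{i,j}$. Then $a_j$ is the unique utility‑maximizing arm of $p_i$ in $\boldsymbol{\nu}'$; running TTC on $\boldsymbol{\nu}'$ ($p_i$ points to the owner of $a_j$, and following the unchanged forward edges of the original single cycle returns to $p_i$) shows $p_i$ lies on a trading cycle that allocates $a_j$ to it, so the core satisfies $\mu^*_{\boldsymbol{\nu}'}(p_i)=a_j$, with $p_i$'s minimum gap in $\boldsymbol{\nu}'$ equal to a constant $\delta'=\delta'(\eta)>0$. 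Since $p_i$'s per‑round core regret in $\boldsymbol{\nu}'$ is at least $\delta'$ on every round it does not propose $a_j$, and since individual regrets are nonnegative, uniform consistency of $\pi$ gives $\mathbb{E}_{\boldsymbol{\nu}'}[T-N_{j}^{(i)}(T)]=o(T^\alpha)$ for all $\alpha\in(0,1)$; likewise, by the second branch of Lemma~\ref{lem:regret_decompose} and consistency on $\boldsymbol{\nu}$, $\mathbb{E}_{\boldsymbol{\nu}}[N_{j}^{(i)}(T)]\le Reg_i(T;\boldsymbol{\nu},\pi)/\Delta_{j}^{(i)}=o(T^\alpha)$. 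Because $\boldsymbol{\nu}$ and $\boldsymbol{\nu}'$ differ only in the $(i,j)$ arm, and the only observation whose law differs is the reward $p_i$ draws on a collision‑free pull of $a_j$ (application profiles seen by arm owners are deterministic functions of the actions), the divergence‑decomposition lemma \citep{lattimore2020bandit} gives, over the first $T$ rounds, $D(\mathbb{P}_{\boldsymbol{\nu},\pi},\mathbb{P}_{\boldsymbol{\nu}',\pi})\le\mathbb{E}_{\boldsymbol{\nu}}[N_{j}^{(i)}(T)]\,D(\boldsymbol{\nu}_{i,j},\nu'_{i,j})$. Applying the Bretagnolle--Huber inequality to the event $\{N_{j}^{(i)}(T)>T/2\}$ and combining with the two $o(T^\alpha)$ estimates yields $\mathbb{E}_{\boldsymbol{\nu}}[N_{j}^{(i)}(T)]\,D(\boldsymbol{\nu}_{i,j},\nu'_{i,j})\ge(1-\alpha)\log T-O(1)$; letting $T\to\infty$, then $\alpha\to0$, then $\eta\to0$ gives the second bound.

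For the first bound, fix $i'\neq i$ and $\eta>0$, choose $\nu''_{i',\mu^*(p_i)}\in\gP$ with mean exceeding $\mU(i',\mu^*(p_{i'}))$ and $D$‑distance to $\boldsymbol{\nu}_{i',\mu^*(p_i)}$ within $\eta$ of the infimum, and let $\boldsymbol{\nu}''$ be $\boldsymbol{\nu}$ with that single entry replaced. The extra step, and the crux of the argument, is the structural claim that the core of $\boldsymbol{\nu}''$ reallocates $\mu^*(p_i)$ to $p_{i'}$, i.e.\ $\mu^*_{\boldsymbol{\nu}''}(p_{i'})=\mu^*(p_i)$: in $\boldsymbol{\nu}''$ the arm $\mu^*(p_i)$ is $p_{i'}$'s unique utility‑maximizing arm, so in TTC the vertex $p_{i'}$ points to the owner of $\mu^*(p_i)$, and since $p_{i'}$ is the only vertex whose out‑edge differs from the original single cycle $p_{\sigma_1}\to\cdots\to p_{\sigma_N}\to p_{\sigma_1}$, following the remaining forward edges from that owner returns to $p_{i'}$, so $p_{i'}$ closes a trading cycle (a self‑loop if $p_{i'}$ already owns $\mu^*(p_i)$) that hands $\mu^*(p_i)$ to it. Given this, $p_{i'}$'s per‑round core regret in $\boldsymbol{\nu}''$ is at least its (constant) minimum gap $\delta''>0$ on every round it fails to propose $\mu^*(p_i)$, so nonnegativity of individual regrets and uniform consistency give $\mathbb{E}_{\boldsymbol{\nu}''}[T-N_{\mu^*(p_i)}^{(i')}(T)]=o(T^\alpha)$, while the first branch of Lemma~\ref{lem:regret_decompose} and consistency on $\boldsymbol{\nu}$ give $\mathbb{E}_{\boldsymbol{\nu}}[N_{\mu^*(p_i)}^{(i')}(T)]\le Reg_i(T;\boldsymbol{\nu},\pi)/\Delta_{\min}^{(i)}=o(T^\alpha)$. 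Since $\boldsymbol{\nu}$ and $\boldsymbol{\nu}''$ differ only in the $(i',\mu^*(p_i))$ arm, $D(\mathbb{P}_{\boldsymbol{\nu},\pi},\mathbb{P}_{\boldsymbol{\nu}'',\pi})\le\mathbb{E}_{\boldsymbol{\nu}}[N_{\mu^*(p_i)}^{(i')}(T)]\,D(\boldsymbol{\nu}_{i',\mu^*(p_i)},\nu''_{i',\mu^*(p_i)})$, and the same Bretagnolle--Huber computation (now with the event $\{N_{\mu^*(p_i)}^{(i')}(T)>T/2\}$) delivers the first bound. The main obstacle is precisely this term: one must (i) re‑run TTC after a single‑entry perturbation to certify that the perturbed core assigns $\mu^*(p_i)$ to $p_{i'}$ — which is what lets consistency force $N_{\mu^*(p_i)}^{(i')}(T)\approx T$ under $\boldsymbol{\nu}''$ — and (ii) transport the change‑of‑measure/divergence‑decomposition machinery into the multi‑player collision model, where the right information count is the number of \emph{collision‑free} pulls of the perturbed arm by the single affected player and one must verify that no other observable (in particular the application profiles visible to arm owners) separates $\boldsymbol{\nu}$ from the perturbed instance beyond those reward draws; the remaining bookkeeping and the order of limits $T\to\infty$, $\alpha\to0$, $\eta\to0$ are routine.
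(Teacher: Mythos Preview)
Your proposal is correct and follows essentially the same approach as the paper: perturb a single $(i,j)$ reward distribution, apply the divergence decomposition and the Bretagnolle--Huber inequality to the event $\{N_j^{(i)}(T)>T/2\}$, use uniform consistency on both instances, and plug the resulting pull-count lower bound into Lemma~\ref{lem:regret_decompose}. The only differences are presentational: the paper proves one generic bound $\liminf_{T}\mathbb{E}_{\boldsymbol{\nu},\pi}[N_j^{(i)}(T)]/\log T\ge 1/D_{\inf}(\boldsymbol{\nu}_{i,j},\mU(i,\mu^*(p_i)),\gP)$ for arbitrary $(i,j)$ and then specializes to both branches of the max, whereas you treat the two branches separately and, helpfully, supply the explicit TTC verification (which the paper leaves implicit) that in the single-entry-perturbed STTCB instance the perturbed player's core partner is indeed the target arm.
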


\begin{proof}[Proof Idea]
	For a given STTCB instance $\boldsymbol{\nu}$, we construct a confounding alternative $\boldsymbol{\nu}'$ that differs only in the utility of a single player-arm pair $(p_i, a_j)$, where $a_j$ is the core-matched arm of $p_i$ in $\boldsymbol{\nu}'$ but not in $\boldsymbol{\nu}$. To identify the true instance and avoid linear regret, any algorithm must gather enough evidence to distinguish $\boldsymbol{\nu}$ from $\boldsymbol{\nu}'$, necessitating a number of samples inversely proportional to their KL-divergence. This directly implies a logarithmic regret lower bound.
\end{proof}

The detailed proof of Theorem~\ref{thm:lower_bound} can be found in Appendix~\ref{sec:proof_lower_bound}.
The following corollary demonstrates that the $N \log T / \Delta_{\min}^2$ dependence in Theorem~\ref{thm:decentralize_upper_bound} cannot be improved, thus establishing $\Theta(N \log T / \Delta_{\min}^2)$ as the minimax regret rate for all players.

\begin{coro}\label{coro:lower_bound}
	There exists an STTCB instance with Bernoulli rewards, where the regret of player $p_i$ is lower bounded as $\Omega(\frac{N \log T}{\Delta_{\min}^2})$.
\end{coro}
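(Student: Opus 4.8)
The plan is to obtain the corollary by evaluating the bound of Theorem~\ref{thm:lower_bound} on an explicit family of STTCB instances with Bernoulli rewards, built so that a single designated ``victim'' player $p_i$ suffers a constant penalty whenever its core arm is unavailable, while each of the other $N-1$ players provably needs $\Omega(\log T/\Delta_{\min}^2)$ samples to correctly rank its two best arms; the first branch of the maximum in Theorem~\ref{thm:lower_bound} then aggregates these $N-1$ learning costs into the regret of $p_i$.

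For the construction, fix $N\ge 3$ and a sufficiently small $\Delta_{\min}$ (of order at most $1/N$), and take the single trading cycle $p_1\to p_2\to\cdots\to p_N\to p_1$, so that $\mu^*(p_k)=a_{k+1}$ with indices mod $N$. For the victim $p_i$, set $\mU(i,i+1)=3/4$ and place the means of its remaining $N-1$ arms evenly inside $[0,1/4]$; then every consecutive preference gap of $p_i$ is at least $\tfrac{1}{4N}\ge\Delta_{\min}$, and in particular $\Delta_{\min}^{(i)}=1/2$. For every other player $p_{i'}$ ($i'\neq i$), set $\mU(i',i'+1)=1/2+\Delta_{\min}$ for its top (core) arm, $\mU(i',i+1)=1/2$ for the victim's core arm $a_{i+1}$, and place the means of $p_{i'}$'s remaining $N-2$ arms evenly inside $[1/8,3/8]$, so that all consecutive gaps of $p_{i'}$ are at least $\Delta_{\min}$ and $a_{i+1}$ becomes $p_{i'}$'s second choice. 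A routine check shows that every row of $\mU$ has distinct entries bounded away from $0$ and $1$, that each player's unique top arm is $a_{k+1}$ so the TTC digraph is exactly the above cycle (hence the instance is STTCB with the stated $\mu^*$), and that the global minimum preference gap of Definition~\ref{def:pref_gap} equals $\Delta_{\min}$, attained by each $p_{i'}$'s top-to-second gap.

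Now substitute this instance into the first branch of the maximum in Theorem~\ref{thm:lower_bound}. For each $i'\neq i$ the base distribution is $\boldsymbol{\nu}_{i',\mu^*(p_i)}=\mathrm{Ber}(1/2)$ and the threshold mean is $\mU(i',\mu^*(p_{i'}))=1/2+\Delta_{\min}>1/2$; since $\mathrm{Ber}(1/2+\Delta_{\min}+\epsilon)$ is a feasible competitor in the definition of $D_{\inf}$, letting $\epsilon\to 0^+$ together with the standard bound $\mathrm{kl}(p,q)\le (p-q)^2/(q(1-q))$ gives
\begin{align*}
D_{\inf}\!\left(\mathrm{Ber}(1/2),\,1/2+\Delta_{\min},\,\gP\right) \;\leq\; \mathrm{kl}\!\left(1/2,\,1/2+\Delta_{\min}\right) \;\leq\; \frac{\Delta_{\min}^2}{(1/2+\Delta_{\min})(1/2-\Delta_{\min})} \;\leq\; 8\Delta_{\min}^2
\end{align*}
for $\Delta_{\min}\le 1/4$. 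Hence every summand is at least $\Delta_{\min}^{(i)}/(8\Delta_{\min}^2)=1/(16\Delta_{\min}^2)$, and summing over the $N-1$ indices $i'\neq i$ yields $\liminf_{T\to\infty}Reg_i(T;\boldsymbol{\nu},\pi)/\log T \geq (N-1)/(16\Delta_{\min}^2)$, which is exactly $Reg_i(T)=\Omega(N\log T/\Delta_{\min}^2)$ for any decentralized uniformly consistent $\pi$.

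The only step that requires genuine care is the construction: one must simultaneously preserve the single-cycle STTCB structure, keep all utilities distinct and bounded away from $0$ and $1$ (so the quadratic bound on $\mathrm{kl}$ loses only constant factors), and prevent the spacing of the lower-ranked arms from accidentally producing a gap below $\Delta_{\min}$ — which is precisely why those means are compressed into fixed subintervals and $\Delta_{\min}$ is taken $\lesssim 1/N$. Granting Theorem~\ref{thm:lower_bound}, everything else is a direct substitution.
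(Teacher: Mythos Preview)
Your proposal is correct and takes essentially the same approach as the paper: build a single-cycle STTCB instance in which the designated victim $p_i$ has $\Delta_{\min}^{(i)}=\Theta(1)$, each other player $p_{i'}$ has the victim's core arm $a_{i+1}$ sitting just $\Delta_{\min}$ below its own core arm so that $D_{\inf}(\boldsymbol{\nu}_{i',\mu^*(p_i)},\mU(i',\mu^*(p_{i'})),\gP)=O(\Delta_{\min}^2)$, and then read off the first branch of Theorem~\ref{thm:lower_bound}. The paper's own construction is slightly simpler (it sets all non-top arms of each $p_{i'}$ to the common value $1/2-\Delta$ and all non-top arms of $p_i$ to $1/4$), whereas you spread those lower arms evenly over subintervals to preserve strict preferences; this extra care is what forces your restriction $\Delta_{\min}\lesssim 1/N$, and both routes land on the same $(N-1)/(16\Delta_{\min}^2)$ bound.
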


Reward heterogeneity necessitates $\Omega(\log T / \Delta_{\min}^2)$ explorations for the player with minimum gap $\Delta_{\min}$, during which players with substantial gaps ($\Delta_{\min}^{(i')} = \Omega(1)$) incur significant regret. This core observation underpins our proof.

From Corollary~\ref{coro:lower_bound}, we can see that the regret upper bound (Theorem~\ref{thm:decentralize_upper_bound}) for Algorithm~\ref{alg:decentralize} matches the regret lower bound, showing the significance of our proposed algorithm.

\section{Centralized Anytime Algorithm}\label{sec:centralize}
While Algorithm~\ref{alg:decentralize} achieves a regret upper bound that matches the lower bound in Theorem~\ref{thm:lower_bound} for all players, it requires prior knowledge of the time horizon $T$ to properly construct confidence intervals. Although one could apply the doubling trick~\citep{auer1995gambling} to make the algorithm anytime, this approach leads to non-monotonic expected instantaneous regret. Such behavior could raise concerns among players about the algorithm's consistency during execution.

In this section, we develop an adaptive anytime algorithm that operates through a centralized platform capable of aggregating player preferences and regulating allocations without observing individual reward realizations. The algorithm employs the principle of optimism in the face of uncertainty, where each player maintains upper confidence bounds (UCB) to rank arms and submits these rankings to the platform every round. The platform then computes a core matching by applying the TTC algorithm to the collected preferences, and players subsequently pull their assigned arms. 

We begin by formally defining the UCB estimation method used by individual players and introducing several technical concepts necessary for the analysis. Building on this foundation, we derive a regret upper bound for the centralized approach. A key feature of this framework is that the platform's coordination eliminates potential conflicts between player proposals, allowing us to assume collision-free operation throughout the learning process.

\begin{algorithm}[tp]
	{\small
	\caption{Centralized Anytime UCB}\label{alg:centralize}
	\begin{algorithmic}[1]
		\Require player set $\gP$ and the corresponding arm set $\gA$.
		\For{$t = 1, \cdots$}
		\State The \emph{platform} receives ranking $\hat{r}_{i, t}$ from all players $p_i$.
		\State The \emph{platform} computes a core matching $\mu_t$ using the offline TTC algorithm with the ranking $\hat{r}_{i, t}$.
		\For{$i = 1, \cdots, N$}//Players pull arms simultaneously
		\State $A_i(t) = \mu_t(p_i).$
		\State Update $\hat{\mU}^{(t)}(i, A_i(t))$ and $u^{(t)}(i, A_i(t))$ according to Eq.(\ref{eq:empirical_mean}) and Eq.(\ref{eq:ucb}).
		\State Compute the current ranking $\hat{r}_{i, t + 1}$ according to $u^{(t)}(i, \cdot)$.
		\EndFor
		\EndFor
	\end{algorithmic}
}
\end{algorithm}

At iteration $t$, when a player $p_i$ gets matched to arm $A_i(t)$, it would update the estimated preference value $\hat{\mU}(i, A_i(t))$ and the observed time $T_{i, A_i(t)}$ for arm $A_i(t)$ according to Eq.(\ref{eq:empirical_mean}) in Section~\ref{sec:decentralize}.

Then the upper confidence bound, which is called the \emph{index} for each $(i, j)$-th entry of the utility matrix $\mU$ is computed as
{\footnotesize
\begin{align}\label{eq:ucb}
	u^{(t)}(i, j) = \begin{cases}
		\infty  & \text{if } T_{i, j}(t) = 0,\\
		\hat{\mU}^{(t)}(i, j) + \sqrt{\frac{3  \log t}{2 T_{i, j}(t - 1)}}, & \text{ otherwise.}
	\end{cases}
\end{align}
}

Each player $p_i$ uses the index to compute a preference ranking $\hat{r}_{i, t + 1}$, where arms are ordered by their UCBs in a decreasing order (e.g., $\argmax_ju^{(t)}(i, j)$ is ranked first).

Let $T_{\mu}(t)$ denote the number of times a matching $\mu$ is played by time $t$. A matching is \emph{achievable} at time $t$ if it is core-stable according to the current estimated rankings $\left\{\hat{r}_{i, t}\right\}_{i \in [N]}$. A matching is \emph{truly core-stable} if it is core-stable under the true utility matrix $\mU$. 
For player $p_i$ and arm $a_j$, define $M_{i, j}$ as the set of achievable (but not truly core-stable) matchings where $p_i$ is matched to $a_j$. The \emph{achievable preference gap} is defined as $\bar{\Delta}_{i, j} = \mU(i, \mu^*(p_i)) - \mU(i, j)$.

Since the truly core-stable matching $\mu^*$ incurs zero regret, we bound the regret of player $p_i$ as follows:
{\scriptsize
\begin{align}\label{eq:achievable_gap_bound}
	\begin{split}
		Reg_i(T) &\leq \sum_{j: \bar{\Delta}_{i, j} > 0} \bar{\Delta}_{i, j} \left(\sum_{\mu \in M_{i, j}} \mathbb{E} T_{\mu}(T)\right) \\
		&\leq \max_{j} \bar{\Delta}_{i, j} \left(\sum_{\mu \in M} \mathbb{E} T_{\mu}(T) \right),
	\end{split}
\end{align}}
where $M$ is the set of all achievable but non-truly core-stable matchings, i.e., $M = \cup_{(i, j) \in [N] \times [N]} M_{i, j}$.
If a matching $\mu$ is not truly core-stable under the true utility matrix $\mU$, a \emph{blocking coalition} must exist -- a subset of players who can reassign their endowed arms to strictly improve every member's utility. 
For any non-truly-core-stable matching $\mu \in M$, there exists a blocking coalition $\gB_{\mu}$ such that players outside $\gB_{\mu}$ keep their assignments under $\mu^*$. Within $\gB_{\mu}$, at least one player $p_m$ must be part of a \emph{blocking triplet} $(p_m, a_n, a_{n'})$ where under $\mu$, $p_m$ is assigned $a_n$, $\mU(m, n) < \mU(m, n')$ in truth yet the UCB index satisfies $u(m, n) > u(m, n')$. 
If no such triplet existed, then for every matched pair $(p_m, a_n)$ and every $a_{n'}$ with $\mU(m, n) < \mU(m, n')$, we would have $u(m, n) < u(m, n')$. Since $\mu$ is non-core-stable under $\mU$, there must exist a reallocation within $\gB_{\mu}$ that strictly improves every member with respect to $\mU$ -- and, given the UCB ordering, also with respect to the UCB index. This would contradict the UCB-core-stability of the implemented matching.
Let $Q_{m, n}$ be the collection of all such triplets, whenever a non-truly-core-stable $\mu \in M$ is implemented, at least one player-arm pair $(p_m, a_n) \in \gB := \cup_{\mu \in M} \gB_{\mu}$ must realize a triplet from $Q_{m, n}$. This structure yields the following regret bound.

\begin{thm}\label{thm:centralize_upper_bound}
	Following Algorithm~\ref{alg:centralize}, the core regret of player $p_i$ up to time $T$ satisfies 
	{\scriptsize
		\begin{align*}
			Reg_i(T) &\leq \max_{j} \bar{\Delta}_{i, j}\left[\sum_{\substack{(p_m, a_n) \\ \in \gB}} \sum_{\substack{(p_m, a_n, a_{n'}) \\ \in Q_{m, n}}} \left(5 + \frac{6 \log T}{\Delta_{m, n', n}^2}\right)\right] \\
			&\leq \max_{j} \bar{\Delta}_{i, j} \left(5 N^3 + 12 \frac{N^2 \log T}{\Delta_{\min}^2}\right) = \gO \left(\frac{N^2 \log T}{\Delta_{\min}^2}\right).
		\end{align*}
	}
\end{thm}

Theorem~\ref{thm:centralize_upper_bound} offers a centralized problem-dependent $\gO\left(\frac{N^2 \log T}{\Delta_{\min}^2}\right)$ upper bound guarantees on the core regret of each player $p_i$, and the proposed UCB-type algorithm is anytime and adaptive in the sense that the players do not need to know the time horizon $T$ and the minimum preference gap $\Delta_{\min}$ in advance.\looseness=-1

\begin{proof}
	Let $T_{m, n, n'}(T)$ be the number of times player $p_m$ pulls arm $a_n$ while $(p_m, a_n, a_{n'})$ forms a blocking triplet. Since every time a matching $\mu \in M$ is implemented, at least one such blocking triplet $(p_m, a_n, a_{n'}) \in Q_{m, n}$ occurs for some $(p_m, a_n) \in \gB$, we have
	{\footnotesize
		\begin{align}\label{eq:relation_blocking}
			\sum_{\mu \in M} T_{\mu}(T) \leq \sum_{\substack{(p_m, a_n) \\ \in \gB}} \sum_{\substack{(p_m, a_n, a_{n'}) \\ \in Q_{m, n}}} T_{m, n, n'}(T).
		\end{align}
	}
	From the analysis above, we know that when $(p_m, a_n, a_{n'})$ is blocking, we have the UCB index $u(m, n) > u(m, n')$ while $\mU(m, n) < \mU(m, n')$ according to the ground-truth utility matrix.
	Standard analysis for the single player UCB~\citep{bubeck2012regret} shows that
	{\footnotesize 
		\begin{align}\label{eq:blocking_triplet_time}
			\mathbb{E} T_{m,n, n'}(T) \leq 5 + \frac{6 \log T}{\Delta^2_{m, n', n}}.
		\end{align}
	}
	Combining Eq.(\ref{eq:achievable_gap_bound}), (\ref{eq:relation_blocking}) and (\ref{eq:blocking_triplet_time}) we get the first inequality on the regret upper bound.
	
	When we consider the triplet set composed of all possible $\mU(m, n) < \mU(m, n')$, we have $\abs{Q_{m, n}} \leq N$ and 
	{\scriptsize
		\begin{align*}
			\sum_{n': \mU(m, n) < \mU(m, n')} \frac{1}{\Delta_{m, n', n}^2} \leq \sum_{n' = 1}^N \frac{1}{(n')^2 \Delta_{\min}^2} \leq \frac{2}{\Delta_{\min}^2},
		\end{align*}
	}
	and hence
	{\scriptsize
		\begin{align*}
			&\max_j \bar{\Delta}_{i, j} \left[\sum_{(p_m, a_n) \in \gB} \sum_{(p_m, a_n, a_{n'}) \in Q_{m, n}} \left(5 + \frac{6 \log T}{\Delta_{m, n', n}^2}\right)\right] \\
			\leq &\max_j \bar{\Delta}_{i, j} \left[\sum_{(p_m, a_n) \in \gB} \left(5 N + \frac{12 \log T}{\Delta_{\min}^2}\right)\right].
		\end{align*}
	}
	
	As there are at most $N^2$ possible player-arm pairs in the blocking coalition, the second inequality is concluded.
\end{proof}

\section{Conclusion and Discussion}\label{sec:conclusion}
This paper introduces a novel online learning framework for housing markets with uncertain preferences, unifying two key objectives: core-stability and sample efficiency. We formalize this through core regret as our performance metric and propose two algorithms that bridge multi-armed bandit techniques with housing market mechanisms. These algorithms adapt to both centralized and decentralized settings, with guarantees for fixed-horizon and anytime environments. A matching regret lower bound proves the order-optimality of the decentralized algorithm.

There are many additional questions that can be studied in this model.

\paragraph{Housing Markets with Existing Tenants}
This paper focuses on housing markets where each player initially possesses one endowed arm, maintaining an equal number of players and arms. However, real-world housing markets typically involve more complex scenarios: existing tenants with endowed arms coexist with new applicants lacking initial allocations, while vacant houses (free arms) may also be available~\citep{abdulkadirouglu1999house}. In these generalized settings, the core loses its uniqueness, raising two fundamental challenges: first, identifying an appropriate tractable solution concept to serve as a benchmark, and second, determining whether efficient learning algorithms can achieve sublinear regret in this more realistic but complicated environment. \looseness=-1

\paragraph{Housing Markets with Indifference}
While our current analysis assumes strict player preferences over arms, real-world housing markets often involve indifference between options. Our learning algorithm achieves a regret bound of $\Theta(N \log T / \Delta_{\min}^2)$, which becomes vacuous when $\Delta_{\min} = o(1 / \sqrt{T})$ -- precisely when indifference between arms creates vanishing preference gaps. This limitation highlights the need for new algorithmic approaches that can handle indifferences in housing market allocations, presenting an important direction for future research. 

\paragraph{Incentive Compatibility in the Learning Setting}
While the top trading cycle algorithm is strategy-proof under known, deterministic preferences, its incentive compatibility remains unclear in learning settings where preferences must be discovered dynamically. In particular, the decentralized nature of these interactions may create opportunities for strategic manipulation through learning behavior. Understanding whether and how players can exploit the learning process to achieve better outcomes presents a significant open question for future research.

\bibliography{reference}

\appendix
\onecolumn

\section{Proof of Regret Upper Bound of the Decentralized Algorithm}\label{sec:proof_upper_bound_decentralize}
\subsection{Proof of Lemma~\ref{lem:bad_concentration_prob}}\label{subsec:proof_bad_concentration_prob}
\begin{proof}
	\begin{align*}
		\sP(\gF) &= \sP \left(\exists 1 \leq t \leq T, i \in [N], j \in [N]: \abs{\hat{\mU}^{(t)}(i, j) - \mU(i, j)} > \sqrt{\frac{6 \log T}{T_{i, j}^{(t)}}}\right) \\
		&\leq \sum_{t = 1}^T \sum_{i \in [N]} \sum_{j \in [N]} \sP \left(\abs{\hat{\mU}^{(t)}(i, j) - \mU(i, j)} > \sqrt{\frac{6 \log T}{T_{i, j}^{(t)}}}\right) \\
		&\leq \sum_{t = 1}^T \sum_{i \in [N]} \sum_{j \in [N]} \sum_{s = 1}^t \sP \left(T_{i, j}^{(t)} = s, \abs{\hat{\mU}^{(t)}(i, j) - \mU(i, j)} > \sqrt{\frac{6 \log T}{s}}\right) \\
		&\leq \sum_{t = 1}^T \sum_{i \in [N]} \sum_{j \in [N]} t \cdot 2 \exp \left(-3 \log T\right) \\
		&\leq \frac{2 N^2}{T},
	\end{align*}
	where the second last inequality results from Lemma~\ref{lem:concentration}.
\end{proof}

\subsection{Proof of Lemma~\ref{lem:yrmh_igyt_step}}\label{subsec:proof_yrmh_igyt_step}
\begin{proof}
	According to Lemma~\ref{lem:ucb_lcb} and Algorithm~\ref{alg:decentralize}, when player $p_i$ enters in phase 2 with ranking $\sigma_i$, we have
	\begin{align*}
		\mU(i, \sigma_{i, j}) > \mU(i, \sigma_{i, j + 1}), \quad \forall j \in [N].
	\end{align*}
	That is to say, player $p_i$ successfully recovers the real preference ranking when she enters in phase 2. Further, according to Lemma~\ref{lem:exploration_length}, all players would enter in phase 2 simultaneously. Based on Lemma~\ref{lem:offline_yrmh_igyt}, we know that when given the true preference profile, YRMH-IGYT would stop in at most $N^2$ steps, and once the core matching is reached, players would focus on their core matching partner in all the remaining rounds. Since there is a unique core matching and no two players would share their core matching partners, $p_i$ could always get matched to $\mu^*(p_i)$.
\end{proof}

\begin{lem}\label{lem:ucb_lcb}
	Condition on $\urcorner \gF$, $\text{UCB}_{i, j}^{(t)} < \text{LCB}_{i, j'}^{(t)}$ implies $\mU(i, j) < \mU(i, j')$.
\end{lem}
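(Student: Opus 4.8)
\textbf{Proof plan for Lemma~\ref{lem:ucb_lcb}.}

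The plan is to argue by contradiction, pushing the inequality $\text{UCB}_{i, j}^{(t)} < \text{LCB}_{i, j'}^{(t)}$ through the definitions in Eq.(\ref{eq:confidence_bound}) together with the defining property of $\urcorner \gF$. First I would recall that on $\urcorner \gF$, for every $t \in [T]$ and every pair $(i, j)$ we have $\abs{\hat{\mU}^{(t)}(i, j) - \mU(i, j)} \leq \sqrt{6 \log T / T_{i, j}^{(t)}}$ (using $T_{i,j}^{(t)} \geq 1$ whenever the arm has been observed; if $T_{i,j}^{(t)} = 0$ then the $\max\{T_{i,j}, 1\}$ in Eq.(\ref{eq:confidence_bound}) makes the interval width $2\sqrt{6\log T}$, which already exceeds the diameter of $[0,1]$, so the claimed implication is vacuously consistent and needs separate handling — I would note this edge case but it does not arise once exploration has begun). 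This gives the two-sided sandwich $\mU(i, j) \in [\text{LCB}_{i,j}^{(t)}, \text{UCB}_{i,j}^{(t)}]$ and $\mU(i, j') \in [\text{LCB}_{i,j'}^{(t)}, \text{UCB}_{i,j'}^{(t)}]$ simultaneously.

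The key step is then immediate: from $\mU(i, j) \leq \text{UCB}_{i,j}^{(t)}$, the hypothesis $\text{UCB}_{i,j}^{(t)} < \text{LCB}_{i,j'}^{(t)}$, and $\text{LCB}_{i,j'}^{(t)} \leq \mU(i, j')$, we chain the inequalities to obtain $\mU(i, j) < \mU(i, j')$, which is exactly the conclusion. I would write this as a three-term chain in a single display.

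There is essentially no hard part here — the lemma is a direct consequence of the concentration statement defining $\urcorner \gF$ and the symmetric construction of UCB/LCB around $\hat{\mU}$. The only subtlety worth spelling out is the consistency between the radius $\sqrt{6 \log T / \max\{T_{i,j}, 1\}}$ used in Eq.(\ref{eq:confidence_bound}) and the radius $\sqrt{6 \log T / T_{i,j}^{(t)}}$ appearing in the event $\gF$: on $\urcorner \gF$ the deviation is controlled by the latter, and since $\max\{T_{i,j}^{(t)}, 1\} \geq T_{i,j}^{(t)}$ the interval $[\text{LCB}_{i,j}^{(t)}, \text{UCB}_{i,j}^{(t)}]$ is at least as wide as the guaranteed confidence band, so $\mU(i,j)$ indeed lies inside it. I would make this comparison explicit in one line before concluding.
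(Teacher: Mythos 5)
Your proposal is correct and follows essentially the same route as the paper's own proof: on $\urcorner \gF$ the true mean is sandwiched in $[\text{LCB}_{i,j}^{(t)}, \text{UCB}_{i,j}^{(t)}]$, and the conclusion follows by chaining $\mU(i,j) \leq \text{UCB}_{i,j}^{(t)} < \text{LCB}_{i,j'}^{(t)} \leq \mU(i,j')$. Your extra remarks on the $T_{i,j}^{(t)} = 0$ edge case and the $\max\{T_{i,j},1\}$ radius are careful touches the paper omits, but they do not change the argument.
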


\begin{proof}
	According to the definition of LCB and UCB, we have that conditional on $\urcorner \gF$,
	\begin{align*}
		\text{LCB}_{i, j}^{(t)} &= \hat{mU}_{i, j}^{(t)} - \sqrt{\frac{6 \log T}{T_{i, j}^{(t)}}} \\
		&\leq \mU(i, j) \\
		&\leq \hat{\mU}^{(t)}(i, j) + \sqrt{\frac{6 \log T}{T_{i, j}^{(t)}}} \\
		&= \text{UCB}_{i, j}^{(t)}.
	\end{align*}
	Therefore, if $\text{UCB}_{i, j}^{(t)} < \text{LCB}_{i, j'}^{(t)}$, we have that 
	\begin{align*}
		\mU(i, j) \leq \text{UCB}_{i, j}^{(t)} \leq \text{LCB}_{i, j}^{(t)} \leq \mU(i, j').
	\end{align*}
\end{proof}

\begin{lem}[YRMH-IGYT Procedure]\label{lem:offline_yrmh_igyt}
	Given the true preferences of all players, at most $N$ epochs are needed to find the core matching, while in each epoch, at most $N$ rounds are needed to reach a top trading cycle. Therefore, at most $N^2$ steps are needed for the YRMH-IGYT procedure to detect the core matching.
\end{lem}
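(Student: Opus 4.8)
The plan is to establish the two quantitative bounds in the statement separately — at most $N$ rounds per epoch and at most $N$ epochs — and then to invoke the classical correctness of Top Trading Cycles to conclude that the matching produced is the unique core matching $\mu^*$.

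First I would bound the number of rounds inside a single epoch. Fix an epoch with arm set $\gA_k$ and let $m_k = \abs{\gA_k}$ be the number of arms, equivalently of still-unassigned players, with which it begins. The epoch traces a proposal chain $v_1 \to v_2 \to v_3 \to \cdots$, where $v_1$ is the least-indexed unassigned player and, for $r \geq 1$, $v_{r+1}$ is the owner of $v_r$'s most preferred arm in $\gA_k$ — precisely the player whose endowed arm $v_r$ proposes to in the $r$-th round of the epoch. Since each remaining player owns exactly one remaining arm and points to exactly one remaining arm, $v_r \mapsto v_{r+1}$ is a well-defined self-map of the $m_k$-element set of unassigned players; iterating it from $v_1$, the pigeonhole principle forces a repeated vertex within $m_k$ steps (a self-loop $v_{r+1} = v_r$ counts as a cycle of length one). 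In Algorithm~\ref{alg:decentralize} this first repeat is exactly the cycle-closure event of Line~\ref{algline:cycle_complete}: the revisited player has already set its $Propose(\cdot)$ flag and now receives a proposal, so a top trading cycle is detected and its members fix their allocations via Line~\ref{algline:cycle_set}. Hence the epoch ends within $m_k \leq N$ rounds.

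Next I would bound the number of epochs. The cycle found in each epoch has length at least one, and every player on it leaves the market permanently while its endowed arm is deleted, so $\gA_{k+1} \subsetneq \gA_k$ and $m_{k+1} \leq m_k - 1$; starting from $m_1 = N$ and stopping once $\gA_k$ is empty, there are at most $N$ epochs. Multiplying the two bounds yields at most $N^2$ rounds in total (in fact $\sum_k m_k \leq N + (N-1) + \cdots + 1 \leq N^2$). Finally, since every player is endowed with exactly one arm, the YRMH-IGYT procedure coincides with the Top Trading Cycles algorithm, whose output is the unique core allocation $\mu^*$ \citep{shapley1974cores,roth1977weak,abdulkadirouglu1999house}; thus the matching reached after these $\leq N^2$ rounds equals $\mu^*$, and once a player lies on a completed cycle it proposes only to its fixed arm thereafter.

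I expect the main obstacle to be not the counting but the bookkeeping that ties the round-by-round, flag-driven Phase~2 dynamics of Algorithm~\ref{alg:decentralize} to the abstract chain/cycle picture above: one must verify that the decentralized cycle-closure test in Line~\ref{algline:cycle_complete}, together with the backward propagation through the $F_{j_r}$ flags, fires exactly when the abstract chain first revisits a vertex; that self-cycles are treated as length-one cycles; that no stalled or collision rounds are inserted between epochs; and that the epoch counter $k$ advances precisely when a cycle completes, so that the word ``epoch'' in the lemma matches the variable $k$ in the code. Once this correspondence is pinned down, the two pigeonhole arguments close the proof.
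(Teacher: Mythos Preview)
Your proposal is correct and follows essentially the same approach as the paper: bound the rounds per epoch by a pigeonhole argument on the proposal chain, bound the number of epochs by the fact that each detected cycle removes at least one player, and multiply. Your write-up is in fact more explicit than the paper's on both the pigeonhole step and the reduction to classical TTC correctness, and your caveat about matching the flag-driven dynamics of Algorithm~\ref{alg:decentralize} to the abstract chain is well taken (the paper offloads this to Lemma~\ref{lem:cycle_detection} without spelling out the correspondence).
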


\begin{proof}
	In the YRMH-IGYT procedure, each epoch proceeds as follows:
	\begin{enumerate}
		\item Leader Selection: At the start of each epoch, the player with the smallest-index endowed arm among the remaining available arms is chosen as the leader to initiate the epoch.
		\item Proposal Dynamics: The leader begins by proposing to their most preferred available arm. For subsequent rounds, any player who receives a proposal in round $t$ will, in round $t + 1$, propose to their own most favored remaining arm. If a player receives no proposal, they remain passive and do not make any offers. Critically, exactly one player proposes to one arm in every round, ensuring a structured progression.
		\item Termination Condition: An epoch terminates when a player who previously made a proposal receives a new proposal from another player. This player must belong to a top trading cycle (TTC), as demonstrated in Lemma~\ref{lem:cycle_detection}.
	\end{enumerate}
	
	We could observe that each epoch identifies exactly one TTC, as the procedure ensures players always propose to their top remaining choices. Then in the worst case, in a single epoch, the request path may involve all $N$ players, requiring up to $N$ steps. On the other hand, since each TTC has length $\geq 1$, at most $N$ epochs are needed, leading to an overall worst-case runtime of $N^2$ steps.
\end{proof}

\begin{lem}[Cycle Detection]\label{lem:cycle_detection}
	In any epoch during the YRMH-IGYT procedure, the first repeated player in the request path belongs to a top trading cycle.
\end{lem}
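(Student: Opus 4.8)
The plan is to reduce the lemma to an elementary fact about the forward orbit of a fixed self-map on a finite set (an eventually periodic, ``$\rho$-shaped'' trajectory).

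First I would fix an epoch $k$ with available arm set $\gA_k$ and corresponding available-player set $\gP_k = \{p_i : a_i \in \gA_k\}$, and record the crucial structural point: $\gA_k$ is \emph{frozen} for the entire duration of the epoch, since it is only updated when the epoch terminates. Consequently each player $p_i \in \gP_k$ has an unambiguous target, namely the owner of $p_i$'s most preferred arm within $\gA_k$ (well-defined because preferences are strict and $\gA_k$ is fixed); I would write $f \colon \gP_k \to \gP_k$ for the map sending each $p_i$ to that owner. Note $f$ is single-valued and may have fixed points, corresponding to the case where $p_i$'s best available arm is its own endowed arm $a_i$ (a trivial self-cycle).

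Second, I would identify the request path of the epoch with the forward orbit of the leader under $f$. By the proposal dynamics described in (the proof of) Lemma~\ref{lem:offline_yrmh_igyt}, the leader $p_{i_0}$ first proposes to $f(p_{i_0})$; any player receiving a proposal in a round proposes in the next round to the owner of its top available arm, which is exactly one more application of $f$; and exactly one proposal is issued per round. Hence the sequence of proposers is $p_{i_0},\; p_{i_1}=f(p_{i_0}),\; p_{i_2}=f(p_{i_1}),\;\dots$. Since $|\gP_k| \le N$ is finite, this orbit eventually revisits a player; let $m$ be the least index with $p_{i_m}\in\{p_{i_0},\dots,p_{i_{m-1}}\}$, and write $p_{i_m}=p_{i_r}$ with $0\le r<m$. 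By minimality $p_{i_0},\dots,p_{i_{m-1}}$ are pairwise distinct, so $p_{i_m}$ is exactly the ``first repeated player'' whose re-appearance triggers the termination test at Line~\ref{algline:cycle_complete}. Then $f(p_{i_r})=p_{i_{r+1}},\dots,f(p_{i_{m-1}})=p_{i_m}=p_{i_r}$, so $p_{i_r},p_{i_{r+1}},\dots,p_{i_{m-1}}$ form a directed cycle under $f$: each points to the owner of its most preferred arm in $\gA_k$, and these pointers close up. This is precisely a top trading cycle with respect to $\gA_k$, and $p_{i_m}=p_{i_r}$ is one of its members, proving the claim.

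I expect the only genuinely delicate point to be the first step: pinning down that the available arm set is frozen over the whole epoch, so that ``most preferred available arm'' is unambiguous and $f$ is a bona fide function on $\gP_k$. Once that is secured, the rest is the standard observation that the forward orbit of a self-map of a finite set is eventually periodic and its first recurrent point lies on the periodic part. I would also flag explicitly that one need \emph{not} have $r=0$: the leader may sit on a ``tail'' feeding into a cycle that does not contain it, which is exactly why the lemma speaks of the first repeated player rather than of the leader; this does not affect the argument, since any recurrent point of the orbit lies on the cycle.
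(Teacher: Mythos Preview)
Your proposal is correct and follows essentially the same approach as the paper: both identify the request path as a chain in which each player points to the owner of its most preferred remaining arm, and observe that the first repetition closes a cycle which is, by construction, a top trading cycle. Your version is more carefully formalized---making explicit that $\gA_k$ is frozen so the pointer map $f$ is well-defined, and noting that the leader may lie on a tail rather than on the cycle---but the underlying argument is the same.
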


\begin{proof}
	To formalize, consider the request path generated during the procedure:
	\begin{itemize}
		\item If a player $p_i$ proposes to an arm owned by $p_j$, then $p_j$ (upon receiving the proposal) will propose to their favorite remaining arm, say one owned by $p_k$. This creates a chain $p_i \rightarrow p_j \rightarrow p_k \rightarrow \cdots$.
		\item A cycle emerges when this chain loops back to a player already in the sequence (e.g. $p_k \rightarrow p_i$).
	\end{itemize}
	In a housing market, the cycle is the top trading cycle, and it is detected when a player who previously proposed receives a new proposal, closing the loop.
\end{proof}

\subsection{Proof of Lemma~\ref{lem:exploration_length}}\label{subsec:proof_exploration_length}
\begin{proof}
	Since players propose to arms based on the indices of their own endowed arm in a round-robin way, no collision occurs and all players can be successfully accepted at each round in the exploration stage of phase 1. Therefore, at the end of the sub-phase $\ell_{\max}$ defined in Eq.(\ref{eq:explore_epoch}), it holds that $T_{i, j} \geq 96 \log T / \Delta_{\min}^2$ for any $i, j \in [N]$.
		
	By Lemma~\ref{lem:explore_time}, once $T_{i, j} \geq 96 \log T / \Delta_{\min}^2$ for all arms $a_j$, player $p_i$ can recover the true preference ranking under the good concentration event. That is, $p_i$ obtains a permutation $\sigma_i$ satisfying $\text{LCB}_{i, \sigma{i, j}} > \text{UCB}_{i, \sigma{i, j+1}}$ for all $j \in [N]$. Consequently, during the communication stage of sub-phase $\ell_{\max}$, every player will propose to arm $a_j$ in the $j$-th communication round. This ensures that in the $i$-th communication round, player $p_i$ receives $N$ proposals, confirming that all players have successfully learned the true preference profile. Thus, all players synchronously transition to Phase 2 at the end of sub-phase $\ell_{\max}$.
\end{proof}

\begin{lem}\label{lem:explore_time}
	In round $t$, let $T_{i}^{(t)} = \min_{j \in [K]} T_{i, j}^{(t)}$. Conditional on $\urcorner \gF$, if $T_i^{(t)} > 96 \log T / \Delta_{\min}^2$, we have $LCB_{i, r_{i, k}} > UCB_{i, r_{i, k + 1}}^{(t)}$ for any $k \in [N]$.
\end{lem}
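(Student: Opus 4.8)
The plan is a short direct computation that combines the concentration guarantee available on $\urcorner\gF$ with the numerical choice of the constant $96$. Since $T_i^{(t)} = \min_{j\in[N]} T_{i,j}^{(t)}$, the hypothesis forces $T_{i,j}^{(t)} > 96\log T/\Delta_{\min}^2$ for \emph{every} arm $a_j$; in particular $T_{i,j}^{(t)}\ge 1$, so the clipping $\max\{\cdot,1\}$ in Eq.~(\ref{eq:confidence_bound}) is inactive and $p_i$'s confidence half-width for $a_j$ is $\rho_{i,j}^{(t)} := \sqrt{6\log T/T_{i,j}^{(t)}} < \sqrt{6\log T\cdot\Delta_{\min}^2/(96\log T)} = \Delta_{\min}/4$.

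First I would record that on $\urcorner\gF$ we have $\lvert\hat{\mU}^{(t)}(i,j)-\mU(i,j)\rvert\le\rho_{i,j}^{(t)}$ for all $i,j$ and all $t\le T$, which gives the sandwich
\[
\mU(i,j)-2\rho_{i,j}^{(t)} \le \text{LCB}_{i,j}^{(t)} \le \mU(i,j) \le \text{UCB}_{i,j}^{(t)} \le \mU(i,j)+2\rho_{i,j}^{(t)} .
\]
Then, for a fixed $k$ with next-ranked arm $r_{i,k+1}$, I would subtract the upper bound for $a_{r_{i,k+1}}$ from the lower bound for $a_{r_{i,k}}$:
\[
\text{LCB}_{i,r_{i,k}}^{(t)} - \text{UCB}_{i,r_{i,k+1}}^{(t)} \ge \big(\mU(i,r_{i,k})-\mU(i,r_{i,k+1})\big) - 2\rho_{i,r_{i,k}}^{(t)} - 2\rho_{i,r_{i,k+1}}^{(t)} .
\]
Because $r_{i,k}$ and $r_{i,k+1}$ are consecutive in $p_i$'s true ranking, Definition~\ref{def:pref_gap} gives $\mU(i,r_{i,k})-\mU(i,r_{i,k+1})\ge\Delta_{\min}$, while $2\rho_{i,r_{i,k}}^{(t)}+2\rho_{i,r_{i,k+1}}^{(t)} < 4\cdot(\Delta_{\min}/4)=\Delta_{\min}$; hence the right-hand side is strictly positive, i.e. $\text{LCB}_{i,r_{i,k}}^{(t)} > \text{UCB}_{i,r_{i,k+1}}^{(t)}$. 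Since $k$ was arbitrary, the separation holds for all $k$.

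There is essentially no obstacle here: the constant $96$ is chosen precisely so that $4\rho_{i,j}^{(t)}<\Delta_{\min}$, and the statement is then a one-line consequence of the definition of $\gF$ together with Definition~\ref{def:pref_gap}. The only two spots that deserve a careful sentence are (i) verifying $T_{i,j}^{(t)}\ge 1$ so that the denominator in Eq.~(\ref{eq:confidence_bound}) coincides with the one appearing in the definition of $\gF$, and (ii) keeping every inequality strict — the hypothesis $T_i^{(t)}>96\log T/\Delta_{\min}^2$ is a strict inequality, which is exactly what upgrades the conclusion from $\text{LCB}\ge\text{UCB}$ to $\text{LCB}>\text{UCB}$.
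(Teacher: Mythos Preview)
Your proposal is correct and is essentially the same argument as the paper's: both use the sandwich $\mU(i,j)-2\rho_{i,j}^{(t)}\le\text{LCB}_{i,j}^{(t)}$ and $\text{UCB}_{i,j'}^{(t)}\le\mU(i,j')+2\rho_{i,j'}^{(t)}$ on $\urcorner\gF$ together with $4\sqrt{6\log T/T_i^{(t)}}<\Delta_{\min}$. The only cosmetic difference is that the paper phrases it as a proof by contradiction (assuming $\text{LCB}\le\text{UCB}$ and deriving $T_i^{(t)}\le 96\log T/\Delta_{\min}^2$), whereas you give the equivalent direct computation.
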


\begin{proof}
	We prove it by contradiction, suppose that there exists $k \in [N]$ such that $LCB_{i, r_{i, k}} \leq UCB_{i, r_{i, k + 1}}^{(t)}$. Without loss of generality, denote $j$ as the arm on the LHS and $j'$ as the arm on the RHS.
	Conditional on $\urcorner \gF$ and by the definition of LCB and UCB, we have that 
	\begin{align*}
		&\mU(i, j) - 2 \sqrt{\frac{6 \log T}{T_{i}^{(t)}}} \\
		\leq& \text{LCB}_{i, j}^{(t)}\\
		\leq& \text{UCB}_{i, j'}^{(t)} \\
		\leq& \mU(i, j') + 2 \sqrt{\frac{6 \log T}{T_i^{(t)}}}.
	\end{align*}
	Therefore, $\Delta_{i, j, j'} = \mU(i, j) - \mU(i, j') \leq 4 \sqrt{6 \log T / T_i^{(t)}}$, which implies that $T_i^{(t)} \leq 96 \log T / \Delta_{i, j, j'}^2 \leq 96 \log T / \Delta_{\min}^2$, which is a contradiction.
\end{proof}

\section{Proof of Regret Lower Bound}\label{sec:proof_lower_bound}
We will use the following notations throughout the proof of the lower bound.

\begin{itemize}
	\item For any time $t$, the number of times arm $j \in [N]$ is played by player $p_i$ is $N_{j}^{(i)}(t)$. 
	
	\item Distribution of player $i \in [N]$ and arm $j \in [N]$ is given by $\boldsymbol{\nu}_{i,j}$, which has mean $\mU(i, j)$. And we assume that for all $i, j \in [N]$, $\mU(i, j) \in [0, 1]$.
	
	\item The core matching partner of any player $i \in [N]$ is given by $\mu^*(p_i)$. 
	
	\item For any player $i \in [N]$, arm $j \in [N]$, $\Delta_{j}^{(i)} := \mU(i, \mu^*(p_i)) - \mU(i, j)$, the arm gap. This can be negative.
\end{itemize}

\subsection{Divergence Decomposition}\label{subsec:divergence_decompose}
The proof of the divergence decomposition lemma builds upon the framework presented in Chapter 16 of \citet{lattimore2020bandit}, extending it to accommodate multiple players. We first introduce some notations as follows.

\paragraph{Canonical multi-player with endowed arm bandit model:}
We now define the ($N$-player, $N$-arm, $T$-horizon) bandit models, where each player is associated with an arm. The canonical bandit model lies in a measurable space $\left\{\Omega, \gF\right\}$. Let $A_i(t)$ denote the arm chosen by player $p_i$ at time $t$. We denote the rejection by the symbol $\emptyset$. Therefore, $A_i(t) \in [N]$, and $X_i(t) \in \sR \cup \left\{\emptyset\right\}$ for all $i \in [N]$ and $t \in [T]$. Also, $A_i(t)$ and $X_i(t)$ for all $i \in [N]$ and $t \in [T]$ are measurable with respect to $\gF$. Let $H(t) = \left(A_i(t'), X_i(t'): \forall i \in [N], \forall t' \leq t\right)$ be the random variable representing the history of actions taken and rewards seen up to and including time $t$. We have $H(t) \in \gH(t) \equiv \left([N]^N \times \left(\sR \cup \left\{\emptyset\right\}\right)^N\right)^t$. We may set $\Omega \equiv \gH(T)$ and the sigma algebra generated by the history as $\gF \equiv \sigma \left(H(T)\right)$.

\paragraph{Environment:}
The bandit environment is specified by $\boldsymbol{\nu} = \left(\boldsymbol{\nu}_{i, j}: \forall i \in [N], \forall j \in [N]\right)$, where $\boldsymbol{\nu}_{i, j}$ is the distribution of rewards obtained when arm $a_j$ is matched to player $p_i$ in this environment.

\paragraph{Policy:}
A policy is a sequence of distribution of possible request to the arms from the players (which can assimilate any coordination among the players) conditioned on the past events. More formally, the policy $\boldsymbol{\pi} = \left\{\boldsymbol{\pi}_t(\cdot): t \in [T]\right\}$ where $\boldsymbol{\pi} \equiv \left\{\pi_t(i, j | \cdot): \forall i \in [N], \forall j \in [N]\right\}$ is the function that maps the history up to time $t - 1$ to the action $A_i(t), \forall i \in [N]$. Further, $\pi_t(i, j | \cdot): \gH(t - 1) \rightarrow [0, 1]$ denotes the probability, as a function of history $H(t - 1)$ of player $p_i$ playing arm $a_j$.

\paragraph{Probability Measure:}
Each environment $\boldsymbol{\nu}$ and policy $\boldsymbol{\pi}$ jointly induces a probability distribution over the measurable space $\left\{\Omega, \gF\right\}$ denoted by $\sP_{\boldsymbol{\nu}, \boldsymbol{\pi}}$. Let $\mathbb{E}_{\boldsymbol{\nu}, \boldsymbol{\pi}}$ denote the expectation induced. The density of a particular history up to time $T$, under an environment $\boldsymbol{\nu}$ and a policy $\boldsymbol{\pi}$, can be defined as 
\begin{align*}
	&d \sP_{\boldsymbol{\nu}, \boldsymbol{\pi}} \left(\boldsymbol{A}(t), \boldsymbol{X}(t): t \in [T]\right) \\
	=& \prod_{t = 1}^T \pi_t(\boldsymbol{A}(t) | h(t - 1)) p_{\nu} \left(\boldsymbol{X}(t) | \boldsymbol{A}(t)\right) d \lambda \left(\boldsymbol{X}(t); \boldsymbol{\nu}\right) d \varrho \left(\boldsymbol{A}(t)\right).
\end{align*}
Here, $\lambda(\boldsymbol{X}; \boldsymbol{\nu})$ = $\prod_{i = 1}^N \lambda_{i} \left(X_i\right)$ is the dominating measure over the rewards with $\lambda_i(X_i) = \delta_{\emptyset} + \sum_{j} \boldsymbol{\nu}_{i,j}$\footnote{Here $\delta_{\emptyset}$ is the dirac measure on $\emptyset$ denoting the rejection event.}. Also, $\varrho(\boldsymbol{A})$ is the counting measure on the collective action of the players.

\begin{lem}[Divergence Decomposition]\label{lem:divergence_decompose}
	For two bandit instances $\boldsymbol{\nu} = \left\{\boldsymbol{\nu}_{i, j}: i \in [N], j \in [N]\right\}$, and $\boldsymbol{\nu}' = \left\{\boldsymbol{\nu}'_{i, j}: i \in [N], j \in [N]\right\}$, and any admissible policy $\boldsymbol{\pi}$ the following divergence decomposition is true
	\begin{align*}
		D(\sP_{\boldsymbol{\nu}, \boldsymbol{\pi}}, \sP_{\boldsymbol{\nu}', \boldsymbol{\pi}}) = \sum_{i = 1}^N \sum_{j = 1}^N \mathbb{E}_{\boldsymbol{\nu}, \boldsymbol{\pi}} \left[N_j^{(i)}(T)\right] D \left(\boldsymbol{\nu}_{i, j}, \boldsymbol{\nu}'_{i, j}\right).
	\end{align*}
\end{lem}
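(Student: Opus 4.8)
The plan is to run the classical chain-rule computation for relative entropy between the two induced history measures (Chapter~15 of \citet{lattimore2020bandit}), carried through the $N$-player reward structure. First I would form the Radon--Nikodym derivative of $\sP_{\boldsymbol{\nu},\boldsymbol{\pi}}$ with respect to $\sP_{\boldsymbol{\nu}',\boldsymbol{\pi}}$ from the explicit product density displayed above. Since the two environments share the policy $\boldsymbol{\pi}$, the counting measure $\varrho$, and the atom $\delta_{\emptyset}$ in each dominating measure $\lambda_i$, every factor $\pi_t(\boldsymbol{A}(t)\mid h(t-1))$ and $d\varrho(\boldsymbol{A}(t))$ cancels in the ratio and only the reward-density ratios survive, giving
\begin{align*}
	D(\sP_{\boldsymbol{\nu},\boldsymbol{\pi}}, \sP_{\boldsymbol{\nu}',\boldsymbol{\pi}}) = \mathbb{E}_{\boldsymbol{\nu},\boldsymbol{\pi}} \left[\sum_{t=1}^{T} \log \frac{p_{\boldsymbol{\nu}}(\boldsymbol{X}(t)\mid \boldsymbol{A}(t))}{p_{\boldsymbol{\nu}'}(\boldsymbol{X}(t)\mid \boldsymbol{A}(t))}\right].
\end{align*}
This step presupposes $\boldsymbol{\nu}_{i,j}\ll\boldsymbol{\nu}'_{i,j}$ for all $i,j$; otherwise both sides are $+\infty$ and the identity holds trivially.

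Next I would exploit the product structure of the model: conditionally on the joint action $\boldsymbol{A}(t)$ the $N$ rewards are independent, so $p_{\boldsymbol{\nu}}(\boldsymbol{X}(t)\mid\boldsymbol{A}(t)) = \prod_{i=1}^{N} q_i(X_i(t)\mid A_i(t))$, where $q_i(\cdot\mid j)$ is the $\lambda_i$-density of $\boldsymbol{\nu}_{i,j}$ when $p_i$ is the unique proposer to $a_j$ and is the Dirac density at $\emptyset$ otherwise. Hence the log-ratio splits into a double sum over $t$ and $i$, and every collision term drops out because the rejection outcome $\emptyset$ is deterministic and instance-independent; only rounds in which $p_i$ is successfully matched contribute.

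Then I would peel the expectation off one round at a time by the tower rule, conditioning on $(H(t-1),\boldsymbol{A}(t))$: since $\boldsymbol{A}(t)$ is $\sigma(H(t-1))$-randomized and $X_i(t)\sim\boldsymbol{\nu}_{i,A_i(t)}$ on the matched event, the conditional expectation of the $(t,i)$-summand equals $D(\boldsymbol{\nu}_{i,A_i(t)},\boldsymbol{\nu}'_{i,A_i(t)})$ there and $0$ otherwise. Summing over $t$ and $i$ and regrouping by the arm index $j=A_i(t)$ converts this into $\sum_{i=1}^{N}\sum_{j=1}^{N} D(\boldsymbol{\nu}_{i,j},\boldsymbol{\nu}'_{i,j})\,\mathbb{E}_{\boldsymbol{\nu},\boldsymbol{\pi}}[N_j^{(i)}(T)]$, with $N_j^{(i)}(T)$ read as the number of rounds in which $p_i$ observes a reward from $a_j$ (collided proposals carry no divergence, so they are irrelevant to the identity).

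The main obstacle is purely the measure-theoretic bookkeeping around the rejection symbol: fixing the dominating measure $\lambda_i=\delta_{\emptyset}+\sum_j\boldsymbol{\nu}_{i,j}$ precisely, checking mutual absolute continuity of $\sP_{\boldsymbol{\nu},\boldsymbol{\pi}}$ and $\sP_{\boldsymbol{\nu}',\boldsymbol{\pi}}$ (hence the $\ll$ caveat), and justifying the chain rule for relative entropy on the product history space together with the interchange of the finite sum over $t$ with $\mathbb{E}_{\boldsymbol{\nu},\boldsymbol{\pi}}[\cdot]$. Once the canonical model is pinned down, each of these is routine and the computation reduces, summand by summand, to the single-player divergence decomposition of \citet{lattimore2020bandit}.
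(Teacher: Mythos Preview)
Your proposal is correct and follows essentially the same approach as the paper: cancel the policy factors in the density ratio, drop the collision terms because the rejection outcome $\emptyset$ is environment-independent, condition on $\boldsymbol{A}(t)$ to extract $D(\boldsymbol{\nu}_{i,A_i(t)},\boldsymbol{\nu}'_{i,A_i(t)})$ on matched rounds, and regroup by $(i,j)$ to obtain $\mathbb{E}_{\boldsymbol{\nu},\boldsymbol{\pi}}[N_j^{(i)}(T)]$. If anything, your treatment of the dominating measure and the absolute-continuity caveat is more careful than the paper's own write-up.
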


\begin{proof}
	The divergence between two measure, which correspond to two different environments under a policy $\boldsymbol{\pi}$, $\sP_{\boldsymbol{\nu}, \boldsymbol{\pi}}$ and $\sP_{\boldsymbol{\nu}', \boldsymbol{\pi}}$ can be expressed as 
	\begin{align}
		&D \left(\sP_{\boldsymbol{\nu}, \boldsymbol{\pi}}, \sP_{\boldsymbol{\nu}', \boldsymbol{\pi}}\right) \nonumber \\
		=& \mathbb{E}_{\sP_{\boldsymbol{\nu}, \boldsymbol{\pi}}} \left[\sum_{t = 1}^T \log \left(\frac{d \sP_{\sP_{\boldsymbol{\nu}, \boldsymbol{\pi}}}}{d \sP_{\sP_{\boldsymbol{\nu}', \boldsymbol{\pi}}}}\right)\right] \nonumber \\
		=& \mathbb{E}_{\sP_{\boldsymbol{\nu}, \boldsymbol{\pi}}} \left[\log \left(\frac{p_{\boldsymbol{\nu}} (\boldsymbol{X}(t) | \boldsymbol{A}(t))}{p_{\boldsymbol{\nu}'} (\boldsymbol{X}(t) | \boldsymbol{A}(t))}\right)\right] \label{eq:density_cancel} \\
		=& \mathbb{E}_{\sP_{\boldsymbol{\nu}, \boldsymbol{\pi}}} \left[ \sum_{t = 1}^T \log \left(\frac{\prod_{i: X_i(t) \neq \emptyset} p_{\boldsymbol{\nu}} (X_i(t) | \boldsymbol{A}(t))}{\prod_{i: X_i(t)} p_{\boldsymbol{\nu}'} (X_i(t) | \boldsymbol{A}(t))}\right)\right] \label{eq:player_decompose} \\
		=& \mathbb{E}_{\sP_{\boldsymbol{\nu}, \boldsymbol{\pi}}} \left[\sum_{t= 1}^T \sum_{i: X_i(t) \neq \emptyset} \mathbb{E}_{\boldsymbol{\nu}} \left[\log \left(\frac{p_{\boldsymbol{\nu}} (X_i(t) | \boldsymbol{A}(t))}{p_{\boldsymbol{\nu}'} (X_i(t) | \boldsymbol{A}(t))}\right) \middle| \boldsymbol{A}(t)\right]\right] \nonumber \\
		=& \mathbb{E}_{\sP_{\boldsymbol{\nu}, \boldsymbol{\pi}}} \left[\sum_{t = 1}^T \sum_{i: X_i(t) \neq \emptyset} D \left(\boldsymbol{\nu}_{i, A_i(t)}, \boldsymbol{\nu}'_{i, A_i(t)}\right)\right] \label{eq:divergence_def} \\
		=& \sum_{i = 1}^N \sum_{j = 1}^N \mathbb{E}_{\sP_{\boldsymbol{\nu}, \boldsymbol{\pi}}} \left[\sum_{t = 1}^T \ind{A_i(t) = j, X_i(t) \neq \emptyset} D \left(\boldsymbol{\nu}_{i, j}, \boldsymbol{\nu}'_{i, j}\right)\right] \nonumber \\
		=& \sum_{i = 1}^N \sum_{j = 1}^N \mathbb{E}_{\sP_{\boldsymbol{\nu}, \boldsymbol{\pi}}} \left[N_{j}^{(i)}(T)\right] D \left(\boldsymbol{\nu}_{i, j}, \boldsymbol{\nu}'_{i, j}\right), \label{eq:pulling_time}
	\end{align}
	where Eq.(\ref{eq:density_cancel}) comes from the fact that the density of the policy cancels out for the two different environments. Eq.(\ref{eq:player_decompose}) holds because if for some set of actions $\boldsymbol{A}(t)$ player $p_i$ observes $X_i(t) = \emptyset$, it indicates that player $p_i$ is rejected on that round, which is independent of the environment. In particular, we have $p_{\boldsymbol{\nu}} \left(X_i(t) \middle| \boldsymbol{A}(t)\right) = p_{\boldsymbol{\nu}'} \left(X_i(t) \middle| \boldsymbol{A}(t)\right)$ if $X_i(t) = \emptyset$ for any $\boldsymbol{A}(t)$. Eq.(\ref{eq:divergence_def}) comes from the definition of divergence. Eq.(\ref{eq:pulling_time}) uses the definition of $N_j^{(i)}(t)$ the total number of times player $p_i$ successfully plays arm $a_j$ up to time $T$. 
\end{proof}

\subsection{Proof of Regret Decomposition (Lemma~\ref{lem:regret_decompose})}\label{subsec:proof_regret_decompose}
\begin{proof}
	We fix any player $p_i$, $i \in [N]$ for the rest of the proof. Denote $C_i(T)$ as the number of times that player $p_i$ experiences a collision up to time $T$. We have the expected regret for the player $p_i$, under a policy $\boldsymbol{\pi}$ and any bandit instance $\boldsymbol{\nu}$ as 
	\begin{align*}
		Reg_i(T; \boldsymbol{\nu}, \boldsymbol{\pi})
		 = \sum_{j = 1}^N \Delta_{j}^{(i)} \mathbb{E}_{\boldsymbol{\nu}, \boldsymbol{\pi}} \left[N_j^{(i)}(T)\right] + \sum_{j = 1}^N \mU(i, \mu^*(p_i)) \mathbb{E}_{\boldsymbol{\nu}, \boldsymbol{\pi}} \left[C_i(T)\right],
	\end{align*}
	where the first term comes from successfully pulling an arm that is not the core matching partner for player $p_i$ and the second term reflects that for each collision the player $p_i$ obtains 0 reward and hence $\mU(i, \mu^*(p_i))$ regret in expectation. A trivial regret lower bound is
	\begin{align*}
		Reg_i(T; \boldsymbol{\nu}, \boldsymbol{\pi}) &\geq \sum_{j = 1}^N \Delta_j^{(i)} \mathbb{E}_{\boldsymbol{\nu}, \boldsymbol{\pi}} \left[N_j^{(i)}(T)\right] \\
		&= \sum_{a_j \neq \mu^*(p_i)} \Delta_j^{(i)} \mathbb{E}_{\boldsymbol{\nu}, \boldsymbol{\pi}} \left[N_j^{(i)}(T)\right].
	\end{align*}
	
	For an STTCB instance, the core matching partner of a player is her optimal arm, and if two or more players propose to the same arm in one round, it would result in a collision and all of these players would get a reward of 0. Therefore, if any of the other $N - 1$ players propose to pull $\mu^*(p_i)$, player $p_i$ should either move to a sub-optimal arm or it is blocked. In the best possible scenario, the player $p_i$ successfully plays its second best arm, in each of these instances. We have $\Delta_{\min}^{(i)} \leq \mU(i, \mu^*(p_i))$ for non-negative rewards, denote $\bar{N}_{\mu^*(p_i)^{(i')}}$ as the number of times player $p_{i'}$ proposes to arm $\mu^*(p_i)$, the regret from the events when any of the other $N - 1$ players propose to pull arm $\mu^*(p_i)$ is lower bounded by
	\begin{align*}
		Reg_i(T; \boldsymbol{\nu}, \boldsymbol{\pi}) &\geq \sum_{i' \neq i} \Delta_{\min}^{(i)} \mathbb{E}_{\boldsymbol{\nu}, \boldsymbol{\pi}} \left[\bar{N}_{\mu^*(p_i)}^{(i')}(T)\right] \\
		&\geq \sum_{i' \neq i} \Delta_{\min}^{(i)} \mathbb{E}_{\boldsymbol{\nu}, \boldsymbol{\pi}} \left[N_{\mu^*(p_i)}^{(i')}(T)\right].
	\end{align*}
	The last inequality comes from the fact that player $p_{i'}$ could successfully pull arm $\mu^*(p_i)$ when only one player proposes to this arm.
	Therefore, the combined regret bound is given as
	\begin{align*}
		Reg_i(T; \boldsymbol{\nu}, \pi) \geq \max \Biggl\{\sum_{i' \neq i} 	\Delta_{\min}^{(i)} \mathbb{E}_{\boldsymbol{\nu}, \pi} \left[N_{\mu^*(p_i)}^{(i')}(T)\right],
		\sum_{a_j \neq \mu^*(p_i)} \Delta_{j}^{(i)} \mathbb{E}_{\boldsymbol{\nu}, \pi} \left[N_{j}^{(i)}(T)\right]\Biggr\}.
	\end{align*}
\end{proof}

\subsection{Proof of Regret Lower Bound (Theorem~\ref{thm:lower_bound})}\label{subsec:proof_lower_bound}
\begin{proof}
	We consider any instance in the class of STTCB $\boldsymbol{\nu}$, universally consistent policy $\boldsymbol{\pi}$, player $p_i$ for $i \in [N]$, and arm $a_j$ for $j \in [N]$. Let us consider the instance $\boldsymbol{\nu}'$ (which is specific to the $i$ and $j$ pair) where $\boldsymbol{\nu}'_{i', j'} = \boldsymbol{\nu}_{i', j'}$ for all $i' \neq i, j' \neq j$, and $\boldsymbol{\nu}'_{i, j}$ such that $D \left(\boldsymbol{\nu}_{i, j}, \boldsymbol{\nu}'_{i, j}\right) \leq D_{\inf} \left(\boldsymbol{\nu}_{i, j}, \mU(i, \mu^*(p_i)), \gP\right) + \epsilon$ for some $\epsilon > 0$ and $\mU'(i, j) \equiv \rho(\boldsymbol{\nu}'_{i, j}) > \mU(i, \mu^*(p_i))$ where $\rho$ is the operator mapping a distribution to its mean. Note, for $\mU(i, \mu^*(p_i)) < 1$ and $\Delta_{j}^{(i)} > 0$, which holds by assumption, the distribution $\boldsymbol{\nu}'_{i, j}$ exists by definition of $D_{\inf}(\cdot)$. In short, for the $i$-th player we make the $j$-th arm optimal. The optimal arm for player $i$ in the instance $\boldsymbol{\nu}'$ is the arm $j$.
	
	For any event $A$ (and its complement $A^c$), by Lemma~\ref{lem:BH_inequality} in Appendix~\ref{sec:tech_lemma}, we have 
	\begin{align}\label{eq:BH_inequality}
		D\left(\sP_{\boldsymbol{\nu}, \boldsymbol{\pi}}, \sP_{\boldsymbol{\nu}', \boldsymbol{\pi}}\right) \geq \log \left(\frac{1}{2 \left(\sP_{\boldsymbol{\nu}, \boldsymbol{\pi}}(A) + \sP_{\boldsymbol{\nu}', \boldsymbol{\pi}}(A^c)\right)}\right).
	\end{align}
	
	Let us now consider the event $A = \left\{N_j^{(i)}(T) \geq \frac{T}{2}\right\}$. Then, due to the regret decomposition (Lemma~\ref{lem:regret_decompose}), we have the regrets:
	\begin{enumerate}
		\item In instance $\boldsymbol{\nu}$ as $Reg(T; \boldsymbol{\nu}, \boldsymbol{\pi}) \geq Reg_{i}(T; \boldsymbol{\nu}, \boldsymbol{\pi}) \geq \Delta_{j}^{(i)} \frac{T}{2} \sP_{\boldsymbol{\nu}, \boldsymbol{\pi}} \left(\left\{N_j^{(i)}(T) \geq \frac{T}{2}\right\}\right)$.
		
		\item In instance $\boldsymbol{\nu}'$ as $Reg(T; \boldsymbol{\nu}', \boldsymbol{\pi}) \geq Reg_i(T; \boldsymbol{\nu}', \boldsymbol{\pi}) \geq \left(\mU'(i, j) - \mU(i, \mu^*(p_i))\right) \frac{T}{2} \sP_{\boldsymbol{\nu}, \boldsymbol{\pi}} \left(\left\{N_j^{(i)}(T) < \frac{T}{2}\right\}\right)$.
	\end{enumerate}
	
	As the only change in reward distribution happens in player $p_i$, arm $a_j$ pair, by the divergence decomposition (Lemma~\ref{lem:divergence_decompose}), we have
	\begin{align*}
		&D(\sP_{\boldsymbol{\nu}, \boldsymbol{\pi}}, \sP_{\boldsymbol{\nu}', \boldsymbol{\pi}}) \\
		=& D(\boldsymbol{\nu}_{i, j}, \boldsymbol{\nu}'_{i, j}) \mathbb{E}_{\boldsymbol{\nu}, \boldsymbol{\pi}} \left[N_j^{(i)}(T)\right] \\
		\leq& \left(\epsilon + D_{\inf} \left(\boldsymbol{\nu}_{i, j}, \mU(i, \mu^*(p_i)), \gP\right)\right) \mathbb{E}_{\boldsymbol{\nu}, \boldsymbol{\pi}} \left[N_j^{(i)}(T)\right],
	\end{align*}
	where the last inequality comes from the construction of $\boldsymbol{\nu}'_{i, j}$.
	
	Substituting the above three relations in Eq.(\ref{eq:BH_inequality}) we obtain for any $\epsilon > 0$:
	\begin{align*}
		&\left(\epsilon + D_{\inf} \left(\boldsymbol{\nu}_{i, j}, \mU(i, \mu^*(p_i)), \gP\right)\right) \mathbb{E}_{\boldsymbol{\nu}, \boldsymbol{\pi}} \left[N_j^{(i)}(T)\right] \\
		\geq& \log \left(\frac{1}{2 \left(\sP_{\boldsymbol{\nu}, \boldsymbol{\pi}}(A) + \sP_{\boldsymbol{\nu}', \boldsymbol{\pi}}(A^c)\right)}\right) \\
		\geq& \log \left(\frac{T \min \left\{\mU'(i, j) - \mU(i, \mu^*(p_i)), \Delta_j^{(i)}\right\}}{4 \left(Reg_i(T; \boldsymbol{\nu}, \boldsymbol{\pi}) + Reg_i(T; \boldsymbol{\nu}', \boldsymbol{\pi})\right)}\right),
	\end{align*}
	where the last inequality holds since the policy $\boldsymbol{\pi}$ is assumed to be universally consistent. Taking $\epsilon$ and $T$ to the limit, we have
	\begin{align*}
		&\lim_{\epsilon \to 0} \liminf_{T \to \infty} \frac{\mathbb{E}_{\boldsymbol{\nu}, \boldsymbol{\pi}} \left[N_j^{(i)}(T)\right]}{\log T} \\
		\geq& \lim_{\epsilon \to 0} \frac{1}{\epsilon + D_{\inf} \left(\boldsymbol{\nu}_{i, j}, \mU(i, \mu^*(p_i)), \gP\right)} \\
		=& \frac{1}{D_{\inf} \left(\boldsymbol{\nu}_{i, j}, \mU(i, \mu^*(p_i)), \gP\right)}
	\end{align*}
	The above bound is true for any uniformly consistent policy $\boldsymbol{\pi}$, and for player $p_i$, and arm $a_j$ for $i, j \in [N]$. We use the regret decomposition (Lemma~\ref{lem:regret_decompose}) to obtain the final asymptotic regret lower bound for any player $p_i$ as
	\begin{align*}
		&\liminf_{T \to \infty} \frac{Reg_i(T; \boldsymbol{\nu}, \pi)}{\log T} \\
		\geq& \max \Biggl\{\sum_{i' \neq i} \frac{\Delta_{\min}^{(i)}}{D_{\inf}(\boldsymbol{\nu}_{i', \mu^*(p_i)}, \mU(i', \mu^*(p_{i'})), \gP)}, 
		\sum_{a_j \neq \mu^*(p_i)} \frac{\Delta_{j}^{(i)}}{D_{\inf}(\boldsymbol{\nu}_{i, j}, \mU(i, \mu^*(p_i)), \gP)}\Biggr\}.
	\end{align*}
\end{proof}

\subsection{Proof of Corollary~\ref{coro:lower_bound}}\label{subsec:proof_coro_lower_bound}
The above corollary follows readily from Theorem~\ref{thm:lower_bound}. Consider the following STTCB instance, let the optimal arm for any player $p_i$ be $a_{i + 1}$, then all players together form a top trading cycle. For a given index $i$, let for any player $p_{i'}$, $i' \neq i$, set $\mU(i', i' + 1) = \frac{1}{2}$ while for any other arm $a_j$, $j \neq i' + 1$, set $\mU(i', j) = \frac{1}{2} - \Delta$, where $\Delta > 0$ is small enough. Also, let player $p_i$ has the arm mean for $a_{i + 1}$ be $\mU(i, i + 1) = \frac{1}{2}$ and for any other arms $a_j$, $j \neq i + 1$ set $\mU(i, j) = \frac{1}{4}$. For $\gP$ the class of Bernoulli rewards, we have
\begin{align*}
	D_{\inf} \left(\boldsymbol{\nu}_{i', \mu^*(p_i)}, \mU(i', \mu^*(p_i)), \gP\right) \leq \frac{\Delta^2}{4}, \forall i' \neq i,
\end{align*}
and $\Delta_{\min}^{(i)} = \frac{1}{4}$. Therefore, the regret for player $p_i$ is lower bounded by $\frac{(N - 1) \log T}{16 \Delta^2}$.

\section{Technical Lemmas}\label{sec:tech_lemma}
\begin{lem}[Corollary 5.5 in \citet{lattimore2020bandit}]\label{lem:concentration}
	Assume that $X_1, X_2, \cdots, X_n$ are independent $\sigma$-subgaussian random variables centered around $\mu$. Then for any $\varepsilon > 0$, 
	\begin{align*}
		&\sP \left(\frac{1}{n} \sum_{i = 1}^n X_i \geq \mu + \varepsilon\right) \leq \exp \left(- \frac{n \varepsilon^2}{2 \sigma^2}\right), \\
		&\sP \left(\frac{1}{n} \sum_{i = 1}^n X_i \leq \mu - \varepsilon\right) \leq \exp \left(- \frac{n \varepsilon^2}{2 \sigma^2}\right).
	\end{align*}
\end{lem}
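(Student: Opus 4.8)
The plan is to run the classical Chernoff / exponential-moment argument, invoking the defining moment-generating-function estimate for subgaussian variables. Recall that a mean-zero $\sigma$-subgaussian random variable $Y$ satisfies $\mathbb{E}[e^{\lambda Y}] \le \exp(\lambda^2\sigma^2/2)$ for every $\lambda \in \sR$. So the first step is to set $Y_i := X_i - \mu$; these are independent, centered, and $\sigma$-subgaussian, so by independence $\mathbb{E}\big[\exp(\lambda \sum_{i=1}^n Y_i)\big] = \prod_{i=1}^n \mathbb{E}[e^{\lambda Y_i}] \le \exp(n\lambda^2\sigma^2/2)$ for all real $\lambda$.

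For the upper tail, fix $\lambda > 0$ and note that $\{\tfrac1n\sum_i X_i \ge \mu + \varepsilon\} = \{\sum_i Y_i \ge n\varepsilon\} = \{e^{\lambda\sum_i Y_i} \ge e^{\lambda n\varepsilon}\}$, so Markov's inequality gives
\[
\sP\!\left(\tfrac1n\sum_{i=1}^n X_i \ge \mu + \varepsilon\right) \le e^{-\lambda n \varepsilon}\,\mathbb{E}\!\left[e^{\lambda \sum_i Y_i}\right] \le \exp\!\left(\tfrac{n\lambda^2\sigma^2}{2} - \lambda n \varepsilon\right).
\]
Then I optimize the exponent over $\lambda>0$: the quadratic $\tfrac12\lambda^2\sigma^2 - \lambda\varepsilon$ is minimized at $\lambda = \varepsilon/\sigma^2$, which produces exponent $-n\varepsilon^2/(2\sigma^2)$ and hence the claimed bound $\exp(-n\varepsilon^2/(2\sigma^2))$.

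For the lower tail, observe that each $-X_i$ is again $\sigma$-subgaussian and centered at $-\mu$ (subgaussianity is preserved under negation), so applying the upper-tail bound just established to $-X_1,\dots,-X_n$ with the same $\varepsilon$ yields $\sP\!\left(\tfrac1n\sum_i X_i \le \mu - \varepsilon\right) = \sP\!\left(\tfrac1n\sum_i (-X_i) \ge -\mu + \varepsilon\right) \le \exp(-n\varepsilon^2/(2\sigma^2))$, which completes the proof.

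Since this is a textbook concentration inequality, there is no genuine obstacle: the only real \emph{choice} in the argument is the tuning of $\lambda$ in the Chernoff step, and the only points demanding any care are that the subgaussian MGF estimate holds for all $\lambda \in \sR$ (not merely $\lambda > 0$), that it tensorizes over independent summands, and that it is invariant under negation — all immediate from the definition.
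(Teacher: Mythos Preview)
Your proof is correct and is precisely the standard Chernoff argument. Note, however, that the paper does not supply its own proof of this lemma: it is simply cited as Corollary~5.5 of \citet{lattimore2020bandit} and used as a black box, so there is nothing to compare against beyond observing that your argument is the one underlying that reference.
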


\begin{lem}[Bretagnolle-Huber Inequality, Theorem 14.2~\citep{lattimore2020bandit}]\label{lem:BH_inequality}
	Let $P$ and $Q$ be probability measures on the same measurable space $(\Omega, \gF)$, and let $A \in \gF$ be an arbitrary event. Then, 
	\begin{align*}
		P(A) + Q(A^c) \geq \frac{1}{2} \exp \left(-D(P, Q)\right),
	\end{align*}
	where $A^c = \Omega \backslash A$ is the complement of $A$.
\end{lem}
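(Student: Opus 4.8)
The plan is to prove the Bretagnolle--Huber inequality (Lemma~\ref{lem:BH_inequality}) by the standard route: reduce it to a lower bound on $1 - \mathrm{TV}(P,Q)$, and bound that quantity below in terms of the KL divergence $D(P,Q)$ through the Bhattacharyya (Hellinger) affinity. If $D(P,Q) = \infty$ the claim is vacuous, so assume $D(P,Q) < \infty$; fix a $\sigma$-finite measure $\lambda$ dominating both $P$ and $Q$ (for instance $\lambda = P + Q$) and write $p = dP/d\lambda$, $q = dQ/d\lambda$. Then $q/p$ is defined $P$-almost surely since $P(\{p = 0\}) = 0$, and $D(P,Q) < \infty$ forces $P \ll Q$, hence also $P(\{q = 0\}) = 0$ and $\log(p/q) \in L^1(P)$. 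The first step is the elementary reduction $P(A) + Q(A^c) = \int_A p\,d\lambda + \int_{A^c} q\,d\lambda \geq \int \min(p,q)\,d\lambda = 1 - \mathrm{TV}(P,Q)$, using the identities $\int \min(p,q)\,d\lambda = 1 - \mathrm{TV}(P,Q)$ and $\int \max(p,q)\,d\lambda = 1 + \mathrm{TV}(P,Q)$ (immediate from $\min + \max = p + q$, $\max - \min = |p-q|$, and $\mathrm{TV}(P,Q) = \tfrac12\int|p-q|\,d\lambda$). It therefore suffices to show $1 - \mathrm{TV}(P,Q) \geq \tfrac12\exp(-D(P,Q))$.

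The heart of the argument is the chain
\begin{align*}
	\exp\!\big(-\tfrac12 D(P,Q)\big)
	&\;\overset{(a)}{\leq}\; \int \sqrt{pq}\,d\lambda \\
	&\;\overset{(b)}{\leq}\; \Big(\!\int \min(p,q)\,d\lambda\Big)^{1/2}\Big(\!\int \max(p,q)\,d\lambda\Big)^{1/2} = \sqrt{1 - \mathrm{TV}(P,Q)^2},
\end{align*}
where $(a)$ is Jensen's inequality: $\int\sqrt{pq}\,d\lambda = \mathbb{E}_P\big[\exp(\tfrac12\log(q/p))\big] \geq \exp\big(\tfrac12\mathbb{E}_P[\log(q/p)]\big) = \exp(-\tfrac12 D(P,Q))$ by convexity of $\exp$, which is legitimate because $\log(q/p) \in L^1(P)$; and $(b)$ is Cauchy--Schwarz applied to $\sqrt{\min(p,q)}$ and $\sqrt{\max(p,q)}$ with the pointwise identity $pq = \min(p,q)\max(p,q)$. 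Squaring gives $1 - \mathrm{TV}(P,Q)^2 \geq \exp(-D(P,Q))$; and since $\mathrm{TV}(P,Q) \in [0,1]$ we have $1 + \mathrm{TV}(P,Q) \leq 2$, so $1 - \mathrm{TV}(P,Q) = \frac{1 - \mathrm{TV}(P,Q)^2}{1 + \mathrm{TV}(P,Q)} \geq \tfrac12\big(1 - \mathrm{TV}(P,Q)^2\big) \geq \tfrac12\exp(-D(P,Q))$. Combining with the first step finishes the proof.

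I do not expect a conceptual obstacle here: the affinity/Cauchy--Schwarz/Jensen chain is the entire content of the proof, and the rest is routine. The only points that require care are measure-theoretic bookkeeping — selecting the common dominating measure, discarding the $\lambda$-null set $\{p = 0\}$ (which only adds non-negative mass to $Q(A^c)$ and to $\int\max(p,q)\,d\lambda$, and so does not hurt the inequality), and checking the $L^1(P)$-integrability that validates the Jensen step — which is exactly why I would handle the case $D(P,Q) = \infty$ separately and up front.
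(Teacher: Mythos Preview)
Your proof is correct and follows the standard affinity/Cauchy--Schwarz/Jensen route for the Bretagnolle--Huber inequality. The paper itself does not prove this lemma; it simply states it as a technical tool and cites Theorem~14.2 of \citet{lattimore2020bandit}, so there is no paper proof to compare against.
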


\end{document}